\newtheorem{lemma}{Lemma}
\newtheorem{remark}{Remark}
\newtheorem{theorem}{Theorem}
\newtheorem{corollary}{Corollary}
\begin{document}

	\title{RACH in Self-Powered NB-IoT Networks: Energy Availability and Performance Evaluation}
	
	
	\author{\IEEEauthorblockN
		{ Yan Liu, ~\IEEEmembership{Graduate Student Member,~IEEE,}
			Yansha Deng,
			~\IEEEmembership{Member,~IEEE,}\\
			Maged Elkashlan,
			~\IEEEmembership{Senior Member,~IEEE,}
			Arumugam Nallanathan, ~\IEEEmembership{Fellow,~IEEE,}\\
			Jinhong Yuan, 
			~\IEEEmembership{Fellow,~IEEE,}
			and Ranjan K. Mallik,  ~\IEEEmembership{Fellow,~IEEE}
		}\\
		\vspace{-0.2cm}

			\thanks{ Manuscript received February 21, 2020; revised July 1, 2020, September 14, 2020 and November 7, 2020; accepted November 22, 2020. 
			This work was supported in part by
			the U.K. Engineering and Physical Sciences Research Council (EPSRC) under Grant EP/R006466/1, and in part by the UK-India Education and Research Initiative (UKIERI) under Grant DST-UKIERI-2017-18-014.		
			This paper was
			presented in part at the IEEE International Conference on Communications (ICC), May, China, 2019 \cite{8761839}. 	
			The associate editor coordinating the review of this paper and approving it for publication was 
			Salman Durrani.		
			\textit{(Corresponding author: Yansha Deng.)}}

		\thanks{Y. Liu,  M. Elkashlan and A. Nallanathan are with School of Electronic Engineering and Computer Science, Queen Mary University of London, London E1 4NS, UK
			(e-mail:\{yan.liu,  maged.elkashlan, a.nallanathan\}@qmul.ac.uk). }
		\thanks{Y. Deng is with Department  of  Engineering, King's College London, London WC2R 2LS, UK 
			 (e-mail:yansha.deng@kcl.ac.uk).}
		\thanks{J. Yuan is with School of Electrical Engineering and Telecommunications, University of New South Wales, Sydney NSW 2052, Australia
			(e-mail:j.yuan@unsw.edu.au).
		}
		\thanks{R. K. Mallik is with the Department of Electrical Engineering, Indian
			Institute of Technology Delhi, Hauz Khas,
			New Delhi 110016, India (e-mail: rkmallik@ee.iitd.ernet.in).}

		
	}
	
	\maketitle
	
	\begin{abstract}
		NarrowBand-Internet of Things (NB-IoT) is a new 3GPP radio access technology designed to provide better coverage for a massive number of low-throughput low-cost devices in delay-tolerant applications with low power consumption.
		To provide reliable connections with extended coverage,  a repetition transmission scheme is introduced to NB-IoT during both Random Access CHannel (RACH) procedure and data transmission procedure. 
		To avoid the difficulty in replacing the battery for IoT devices, the energy harvesting is considered  as a promising solution to support energy sustainability in the NB-IoT network. 
		In this work, we analyze RACH success probability in a self-powered NB-IoT network taking into account the repeated preamble transmissions and collisions, where each  IoT device with data is active when its battery energy is sufficient to support the transmission.
		We model the temporal dynamics of the energy level as a birth-death process, derive the energy availability of each IoT device, and examine its dependence on the energy storage capacity and the repetition value. 
		We show that in certain scenarios, the energy availability remains unchanged despite randomness in the energy harvesting.
		We also derive the exact expression for the RACH success probability of a {randomly chosen} IoT device under the derived energy availability, which is validated under different repetition values via simulations.
		We show that the repetition scheme can efficiently improve the RACH success probability in a light traffic scenario, but only slightly improves that performance with very inefficient channel resource utilization in a heavy traffic scenario.
		

	\end{abstract}
	
	
	\begin{IEEEkeywords}
		NB-IoT, RACH, collision, energy harvesting, stochastic geometry.
	\end{IEEEkeywords}
	
	\maketitle
	
	\section{Introduction}
	The Internet of Things (IoT) is a novel paradigm that is rapidly gaining interest in modern wireless telecommunications to support connections of billions of miscellaneous innovative devices. 
	Third Generation Partnership Project (3GPP) has introduced several standards in its releases to improve support for Low Power Wide Area (LPWA) IoT connectivity \cite{ratasuk2015overview,kunz2015enhanced,diaz20163gpp}. 
	In Rel-13, EC-GSM-IoT (Extended Coverage-GSM-IoT)\cite{name2015} and LTE-MTC (LTE-Machine-Type-Communications)\cite{name2014} have been introduced to existing Global System for Mobile Communications (GSM)\cite{stuckmann2003gsm} and Long-Term Evolution (LTE)\cite{dahlman20134g} networks for better providing IoT devices,  respectively. 
	Another feature is {NarrowBand-Internet of Things (NB-IoT)}\cite{name2015} whose applications include smart metering, intelligent environment monitoring, logistics tracking, municipal light,  waste management,  and so on. 
	\subsection{NarrowBand-Internet of Things}
	NB-IoT is a new 3GPP radio-access technology developed from existing LTE functionalities, whereas some features of its specification deemed unnecessary for LPWA  IoT needs have been stripped out\cite{hasan2013random}. 
	Because of this, NB-IoT can provide unique advantages for various IoT services over other technologies like 2G, 3G or LTE. LPWA networks mainly require deep/wide coverage, low power consumption, massive connections, and lower cost. The inherent characteristics of NB-IoT make it a good fit for LPWA deployment as shown in Fig. 1.
	\begin{figure}[htbp!]
		\centering
		\includegraphics[width=3in]{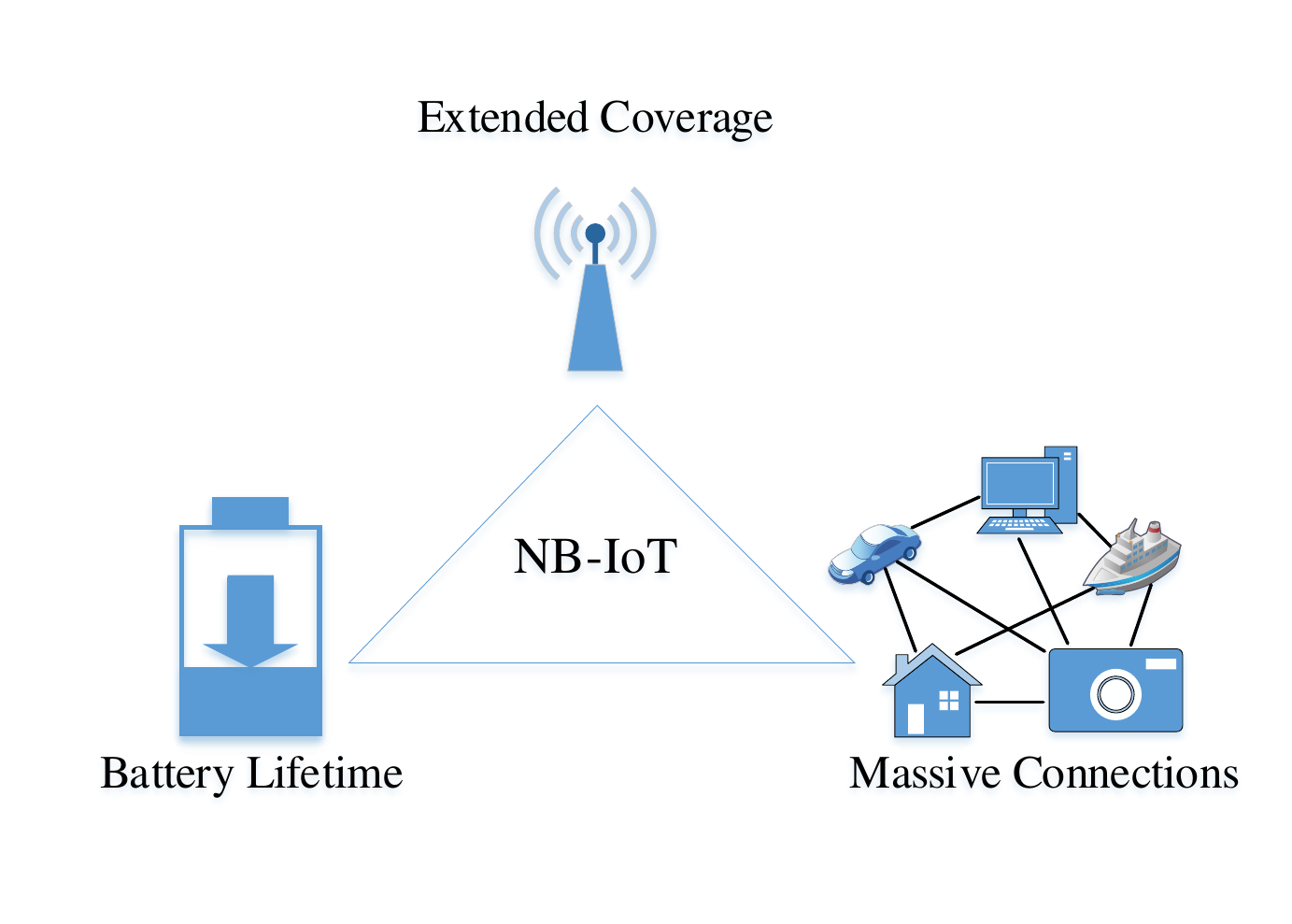}
		\caption{Main features of NB-IoT}
		\label{fig:my_label}
	\end{figure}

	Extending battery lifetime is one of the features of NB-IoT.
	There are several ways to reduce power consumption and achieve lifespans of IoT devices  for more than 10 years. In the most simple way, we can switch the device to sleep mode when it does not work, so it doesn't waste power while waiting. Power management is fundamentally the tradeoff between message frequency, device sleep cycles, and business case needs.
	Deep coverage is another feature of NB-IoT, where it is designed to improve indoor coverage by 20 dB compared to conventional GSM/GPRS. This is achieved by a higher power density, as radio transmissions are concentrated on a narrower carrier bandwidth of just 180 kHz. The Coverage  Enhancement (CE) feature additionally offers the capability to repeat the transmission of a message when there exist poor coverage conditions but at the expense of a lower data rate. 

	NB-IoT reuses the LTE design extensively, such that  the time required to develop full specifications and products is significantly reduced. 
	In the downlink of NB-IoT, OFDMA (Orthogonal Frequency Division Multiplexing) technology is adopted with the sub-carrier spacing of 15 kHz and 12 sub-carriers make up the 180 kHz channel\cite{schlienz2016narrowband}. In the uplink,  SC-FDMA (Single Carrier Frequency Division Multiple Access) technology including single sub-carrier and multiple sub-carrier are adopted. 
	A single sub-carrier technology with 3.75 kHz and 15kHz is adopted with carrier spanning over 48 and 12 respectively. 
	In this paper, we focus on NB-IoT Physical Random Access CHannel (NPRACH) with single sub-carrier spacing of 3.75 kHz. 
	
	NPRACH resources consist of the assignment of time and frequency resources and occur periodically. In the NPRACH, a random access preamble is transmitted which is the first step of random access procedure that enables the device to establish a data connection with its associated Evolved Node B (eNB)\cite{schlienz2016narrowband}. 
	To improve the quality of service and reduce the power  consumption of IoT devices, efficient RACH procedures need to be proposed and analyzed \cite{hasan2013random}. 
	In  \cite{luo1999stability,duan2013dynamic,R3}, mathematical models of contention-based RACH focusing on the Signal-to-Interference-plus-Noise Ratio (SINR) outage or collision have been studied.
	However, to the best of our knowledge, most works have focused either on studying the SINR outage without considering collision or studying the collision from the single cell point of view 
	and most results are based on the uplink power control due to its analysis simplicity.
	
	{
		Our previous work \cite{jiang2017random} has provided the preamble transmission model without considering the collision, \cite{nan2018collision} has considered the collision model without considering the repetition scheme, and \cite{8258982} has considered the repetition scheme based on the uplink power control for simplicity.
		However, the IoT device adopts the cell specific maximum transmission power without power control when transmitting more than two preamble repetitions,  which is motivated by the statement in 3GPP standard \cite{name2015}.
		In this scenario, it is unknown 1) to what extent the transmission power affects the RACH success; 2) to what extent the repetition transmission scheme affects the RACH success; 3) how to choose the repetition value in different traffic scenarios to balance the RACH success probability and data transmission channel resources. 
		To solve these problems, we present a novel mathematical framework to analyze and evaluate the RACH success probability, taking into account the SINR outage events as well as the collision events at the eNB, where the  IoT devices adopt fixed transmission power. }

	
	Generally speaking, in the NB-IoT network, the 
	physical layer parameters and network topology
	can strongly affect the RACH performance, due to that the received SINR at the eNB can be severely degraded by the mutual interference generated from massive IoT devices. In this scenario, the random positions of the transmitters make accurate modeling and analysis of this interference even more complicated. 
	It is worth noting that stochastic geometry has been regarded as a powerful tool to model and analyze mutual interference between transceivers in the wireless networks\cite{elsawy2013stochastic}, 
	{such as conventional cellular networks \cite{novlan2013analytical}, wireless sensor networks \cite{deng2016physical}, cognitive radio networks \cite{6820767}, and heterogenous cellular networks \cite{6596082}.}
	
	{However, conventional stochastic geometry works \cite{elsawy2013stochastic,novlan2013analytical,deng2016physical,6820767,6596082}  focused on analyzing normal uplink or downlink scheduled data transmission channel, where the intra-cell interference is not considered, due to the ideal assumption that each orthogonal sub-channel is not reused in a cell. The work \cite{8587178} still focused on the scheduled data transmission, though it has considered the intra-cell interference over non-orthogonal sub-channels. 
		All these stochastic geometry works are different from the RACH analysis in this paper, where massive IoT devices in a cell may randomly choose and transmit the same preamble to their associated eNB to request for channel resources for data transmission, and different preambles represent orthogonal sub-channels. 
		Thus only IoT devices choosing the same preamble (i.e., sub-channel) have correlations. Note that IoT devices associated to the same eNB may choose the same preamble and collisions occur when two or more IoT devices transmit the same preamble simultaneously, such that the intra-cell interference and collisions are considered.}
	
	{In practical NB-IoT networks, mutual interference among transmissions is much more intricate than the conventional LTE network systems. In this paper, we develop a novel mathematical framework for NB-IoT networks using stochastic geometry for RACH analysis 
		with some challenges: 
		1) the distribution of the random transmission distances is considered, due to that the IoT device transmits with fixed transmission power;
		2) the intra-cell interference is considered, due to that the massive IoT devices in a cell may randomly choose and transmit the same preamble using the same sub-channel; 
		3) the temporally correlated mutual interference is considered, due to that each IoT device remains spatially static during each repetition;
		4) both the SINR outage and the collision events are considered, due to that  collisions occur when two  or  more  IoT  devices  transmit  the  same  preamble  simultaneously;
		5) the network point of view analysis is considered instead of the single cell point of view, which is difficult in capturing both interference and collision generated from IoT devices transmitting with different transmit distances, due to the many concurrent transmissions and the interference experienced in each cell is different. 
	}

	

	\subsection{Energy Harvesting}
	Human-operated cellular devices, such as smart phones, can be charged at will. But IoT devices are often located at remote and hard to reach locations, such as underground or in tunnels, without access power supply, which may be inconvenient, dangerous, expensive, or even impossible to change the battery. 
	Hence, the battery energy storage highly determines the lifetime of the whole device, which provides a strong motivation for powering IoT devices by harvesting energy, such that networks consisting of energy harvesting or rechargeable batteries can survive perpetually\cite{jiang2005perpetual}.
	Practically, energy can be harvested from renewable environmental sources including thermal, solar, wind, etc.\cite{paradiso2005energy} and radio signals of different frequencies such as radio broadcasting \cite{tabassum_hossain_2016}.
	In these cases, the network performances are often tied closely to the efficiency in  energy harvesting and utilization.

	It is difficult to predict the time and the amount when the energy is available, as the energy arrival process is also random and dynamic. Energy buffer, i.e., battery storage, which collects harvested energy for signal processing and communication, is introduced in order to mitigate the unpredictability of the energy.
	Moreover, energy harvesting rates achievable today still fall short of typical power consumption levels. The harvested energy need to be accumulated in storage modules (e.g., capacitors or batteries) to a sufficient energy level to operate the IoT device. 
	Nevertheless, the energy buffer is limited. It remains challenging to assign the amount of energy in terms of uncertain energy sources. To have an in-depth study on the harvesting and utilization of energy, we need an analytically simple yet practically accurate model of the harvested energy. Several models on the harvested energy have been used in literature, assuming a time-slotted system.
	The arriving of harvested energy is known as a deterministic way in\cite{tacca2007cooperative} while unknown in\cite{kansal2007power}.
	In \cite{lei2009generic}\cite{ho2012optimal}, the stationary Markovian models of the harvested energy were studied, which are analytically simple and are thus useful to provide insights for solving some key theoretical problems. However, the validity of these models has not been formally justified with empirical measurements, and hence it is not known if the insights are useful in practice. In  \cite{ho2010markovian}, a more general analytical model for the harvested energy was provided with support from empirical measurements. In their empirical measurements, the harvested energy may be a Markovian process.
	In \cite{niyato2007sleep}, sleep/wake-up strategies for various factors were studied, which determines the optimal parameters of the solar energy harvest based strategy using a bargaining game model.
	However, to the best of our knowledge, there has been no work studying energy harvesting in NB-IoT networks. In this work, we consider NB-IoT networks with the IoT devices harvesting energy from nature, e.g., solar cells, microbial fuel cells, and water mills, etc.
	\subsection{Contributions and Outcomes}
	The contributions of this paper can be summarized as follows:
	
	1) Using stochastic geometry, we present a tractable analytical framework for the self-powered NB-IoT network via energy harvesting from natural resources, in which the locations of the eNBs and the IoT devices are modeled as two independent Poisson Point Processes (PPPs) in the spatial domain. 
	
	2) We model the arrival of harvesting energy as independent Poisson arrival processes. Using tools from Markov stochastic process, we characterize the fraction of time each IoT device kept in ON state as the energy availability. We first model the temporal dynamics of the energy level as a birth-death process, and then derive the expression of energy availability of each IoT device using hitting time analysis. 
	
	3) Based on the derived energy availability of the IoT device, we drive the expression for the RACH transmission success probability of {a randomly chosen} IoT device with fixed transmission power under both SINR outage and collision conditions in the NB-IoT network.
	Furthermore, we develop a realistic simulation framework to capture the randomness locations, the preamble transmission as well as the RACH collision, and verify our derived RACH success probability.
	
	4) Our results show that the energy availability of the IoT device increases at first and then remain unchanged, the RACH success probability increases, and the repetition efficiency decreases when increasing the repetition value.
	{As such, the repetition  value needs to be optimized.} It is also noticed that in certain scenarios, the energy availability remains unchanged despite randomness in the energy harvesting.
	{The results also show that there is an upper limit on transmission power and too large transmission power will waste energy.}
	

	
	
	The rest of the paper is organized as follows. Section II presents the network model. Section III derives energy availability of IoT device and analysis the actual result conditioning on some specific energy utilization strategies. Section IV derives the RACH transmission success probabilities of a {randomly chosen} IoT device. Section V presents the simulation framework. Finally, Section VI  concludes the paper.

	\section{System Model}
	
	\subsection{Network Description}
	We consider an uplink stochastic geometry model for NB-IoT system, consisting of a single class of eNBs and IoT devices, that are spatially distributed in the Euclidean plane $\mathbb{R}^{2}$ following two independent homogeneous Poisson Point Processes (PPPs) $\Phi_{\rm{B}}$ and $\Phi_{\rm{D}}$ with intensities $\lambda_{\rm{B}}$ and $\lambda_{\rm{D}}$, respectively.
	{
		Without loss of generality, we assume each IoT device remains spatially static during time slots\cite{8258982}.} 
	Same as \cite{novlan2013analytical}, we assume that each IoT device associates with its geographically nearest eNB, where a Voronoi tesselation is formed. 
	A standard power-law path-loss model is considered, where the path-loss attenuation is defined as $r^{-\alpha}$, with the propagation distance $r$ and the path-loss exponent ${\alpha}$. 
	In addition, we consider a Rayleigh fading channel, where the channel power gain $h$ is assumed to be exponentially distributed random variable with unit mean, i.e., $h\sim$ Exp(1). 
	All channel gains are assumed to be independent and identically distributed (i.i.d.) in space and time.
	
	\subsection{Random Access Procedure}
	In the uplink of NB-IoT, data can only be transmitted via the dedicated uplink data transmission channel, Narrowband Physical Uplink Shared CHannels (NPUSCH), which is scheduled by the associated eNB. Before resource scheduling, the IoT device needs to execute a RACH to request uplink channel resources with the associated eNB.
	The RACH procedure of NB-IoT has the simplified  message flow as for LTE, however, with different parameters\cite{schlienz2016narrowband}.
	There are two types of RACH procedures: contention-free and contention-based. The former is used to perform handover, whereas the latter is used otherwise 
	\cite{de2017random}. 
	The four-step contention-based RACH used by NB-IoT is described as follows.

	In step 1, the device first transmits a preamble (msg1) on the NPRACH during the first Random Access Opportunity (RAO). The eNB periodically informs the devices about a set of up to 64 orthogonal preamble sequences from which the IoT device can make a choice. Collisions occur when two or more devices transmit the same preamble sequence simultaneously. 
	In step 2, the IoT device sets a Random Access Response (RAR) window and waits for the eNB to transmit a RAR (msg2) with an uplink grant for the transmission of a message in the following step. If the device that sent a preamble sequence does not receive a msg2 from the eNB within a certain period of time, it enters a backoff period, trying to access the network once this period has expired.
	In step 3, the IoT device that successfully receives its RAR transmits a Radio Resource Control (RRC) Connection Request (msg3) with identity information to eNB. If two or more devices have chosen the same preamble sequence in a RAO, they will receive the same grant in the RAR message, and thus, their msg3 transmissions will collide. 
	In step 4, the eNB transmits a RRC Connection Setup (msg4) to the IoT device when it successfully receives msg3.
	More details on the RACH can be found in \cite{dahlman20134g}\cite{jiang2017random}.
	
	Massive connections in NB-IoT make the simultaneous RACH requests under a limited number of available preambles one of the main challenges, thus we focus on the contention of preamble in step 1 of contention-based RACH, with the assumption that steps 2, 3, and 4 of RACH are always successful whenever step 1 is successful. If step 1 in RACH fails, i.e. the associated RAR message was not received, the IoT device needs to transmit another preamble in the next available RAO. That is to say, a RACH procedure is always successful if the IoT device successfully transmits the preamble to its associated eNB. In this case, the failure of this preamble transmission can result from the following two reasons: 1) the eNB cannot decode the preamble due to the low received SINR; 2) the eNB successfully decoded the same preamble from two or more IoT devices in the same time and causes the collision.
	
	It is known that the collision event in step 1 of RACH can be detected by the eNB, when the collided IoT devices are separable in terms of the power delay profile \cite{dahlman20134g}. Our model follows the assumption of collision handling in \cite{nan2018collision}, where collision events are detected by  eNB after it decodes the preambles in step 1 of RACH, and then no response will be fed back from the  eNB to the IoT devices, such that it can not proceed to the next step of RACH \cite{lin2016estimation}.
	
	\subsection{Narrowband Physical Random Access CHannel}
	NB-IoT technology occupies a frequency band of 180 kHz bandwidth\cite{name2016}, which corresponds to one resource block in LTE transmission. 
	In the NPRACH, a preamble is transmitted based on symbol groups on a single subcarrier. 
	A preamble consists of four symbol groups transmitted without gaps and can be repeated several times using the same transmission power. Frequency hopping is applied to symbol group granularity, i.e. each symbol group is transmitted on a different subcarrier.
	To serve UEs in different coverage classes, the NB-IoT network can configure up to 3 NPRACH resource configurations in a cell. In each configuration, a repetition value from the set $\{$1, 2, 4, 8, 16, 32, 64, 128$\}$ is specified for repeating a basic  preamble\cite{wang2017primer}.
	In this model, we consider a single repetition value $N_{\rm{T}}$ as \cite{8258982}, where the channel resources assignment of NPRACHs only takes place at the beginning of each transmission time interval (TTI, i.e., a time slot) as shown in \cite{8258982}.

	\subsection{Energy Harvesting Model}
	We assume that each IoT device is supported solely by the energy harvested from the surrounding environment (e.g., solar, kinetic, wind) and is equipped with a rechargeable battery with the finite capacity to buffer energy. Note that the assumption of a finite energy buffer is realistic since it is not possible to have infinite energy within an IoT device with limited physical dimensions. 
	We model the energy arrival process of a {randomly chosen} IoT device in one TTI as an independent Poisson process with intensity $\mu_0$.
	This assumption is based on the fact that most energy harvesting modules contain small sub-modules harvesting energy independently,  e.g.,  small solar cells harvesting energy in a  solar panel,  where the net energy harvested can be argued to be a Binomial process, which approaches to the Poisson process in the limit when the number of sub-modules grows large.
	This assumption is not uncommon, e.g., see \cite{5992841}\cite{8437573}. 
	
	Since the energy arrivals are random and the energy storage capacities are finite, there is some uncertainty associated with whether the IoT device has enough energy to serve itself at a particular time. 
	Under such a constraint, the IoT device needs to be kept OFF, and be allowed to recharge before it has sufficient energy to serve itself in a given time slot. 
	Thus, at any given time, an IoT device can be in either of the two operational states: \rm{ON} or \rm{OFF}.  
	In this paper, the decision to toggle the operational state of an IoT device, i.e., turn \rm{ON} or \rm{OFF}, is taken by the device independently, and it is not influenced by the operational states of other IoT devices. For example, one IoT device may decide to turn \rm{OFF} if its current energy level reaches below a certain predefined level, and to turn \rm{ON} after harvesting enough energy over the other threshold. 
	
	We define the fraction of time that  a {randomly chosen} IoT device remains \rm{ON} as the $energy$ $availability$ $\eta_0$. In order to obtain $\eta_0$, we first need to characterize how the energy available at the IoT device changes over time.
	We model the available energy of an IoT device as a finite-state continuous-time Markov process (CTMC). In particular, let $\{M(t):t\geqslant 0\}$ be a stationary, homogeneous, and irreducible Markov process with state space $M = \{m_1,m_2,\cdots,m_{{i}}\}$ that specifies the energy state at time $t$.
	For this setup, we define the energy required of a {randomly chosen} IoT device in $each$ $repetition$ $E_{0}$ as the $unit$ $energy$\cite{yu2015energy}.
	Thus,  the real energy state at a {randomly chosen} IoT device is directly proportional to the $E_{0}$ and the proportionality implies the maximum repetition value the IoT device can support.
	Without loss of generality, discretizing the real energy state of the IoT device by dividing by $E_{0}$, we get our state space of a {randomly chosen} IoT device as $M= \{0,1,\cdots,M_0\}$, in which
	$M_0=\bigl\lfloor{E}/{E_{0}}\bigr\rfloor$ and $E$ is the real energy storage capacity of the {randomly chosen} IoT device.
	Thus, the energy state $m\in M$ of the IoT device implies the maximum repetition value the IoT device can support with the energy.
	As such, the temporal dynamics of the battery energy levels can be modeled by the CTMC, in particular, the birth-death process illustrated in Fig. 2. When the IoT device is  \rm{ON} , the energy increases according to the energy harvesting rate $\mu_0$ units energy per second and decreases at a depletion rate of $\nu_0$ units energy per second.
	
	\begin{figure}[htbp!]
		\centering
		\includegraphics[width=3.5in]{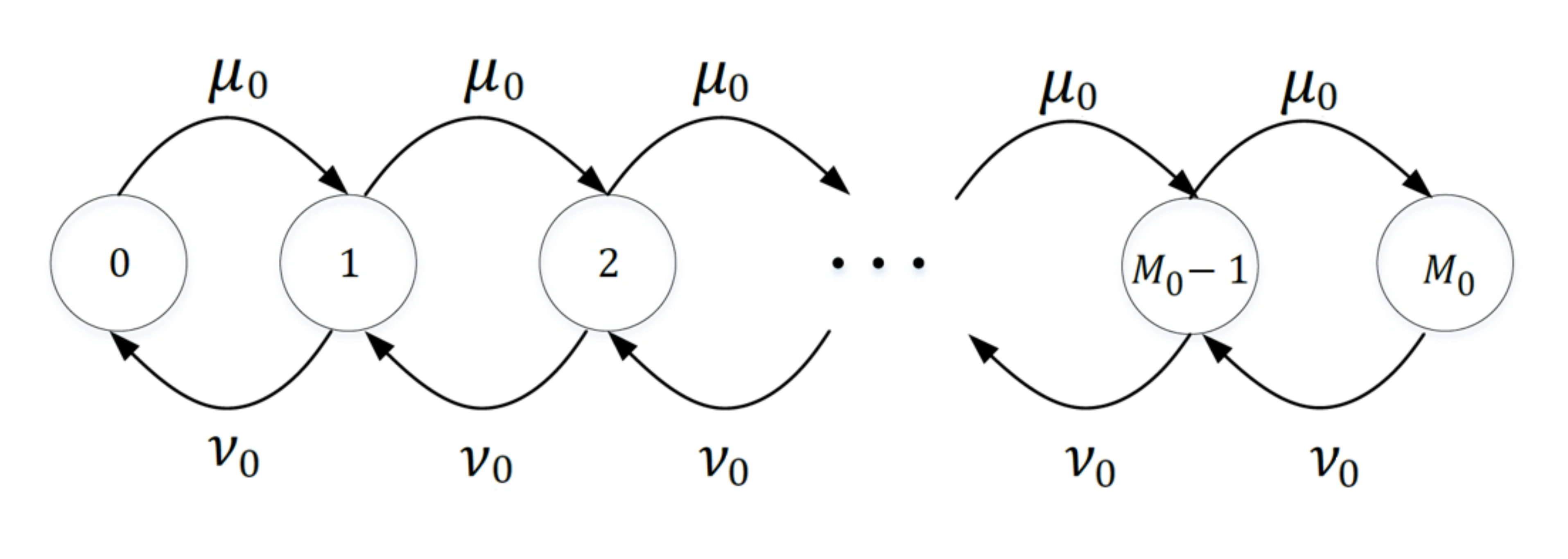}
		\caption{Birth-death process modeling the dynamics of the energy available at an IoT device}
		\label{fig:my_label1}
	\end{figure}

	Note that the data transmission after a successful RACH can be extended following the analysis of RACH success probability. 
	Since the main focus of this paper is analyzing the contention-based RACH in the NB-IoT network, we assume that the actual intended packet transmission is always successful (i.e., the data transmission success probability is one) if the corresponding RACH succeeds and that the RACH either always fails as shown in Fig. 3(a) or succeeds as shown in Fig. 3(b) to obtain the lower bound and upper bound{\footnote{{In our model, we take into account the RACH in one TTI with a single packet sequence transmission as shown rather than multiple TTI transmissions. We need to know the energy availability in the first TTI.
	To calculate energy availability for simplicity, we assume that the RACH either always fails as shown in Fig .3(a) or succeeds as shown in Fig. 3(b) to obtain the lower bound and upper bound of the $energy$ $availability$ $\eta_0$.
	The results of our work are the foundation of further  multiple TTIs analysis.}}} of the $energy$ $availability$ $\eta_0$.
	Thus, we have $E_{0}^f$ = $E_{0}^{RA}$ for failure RACH and $E_{0}^s$ = $E_{0}^{RA}$ + $E_{0}^{DA}$ for successful RACH, where $E_{0}^{RA}$ is the energy required for the RACH in $each$ $repetition$; $E_{0}^{DA}$ is the energy required for the data transmission in $each$ $repetition$.
		\begin{figure}[htbp!]
		\centering
		\subfigure[]
		{\begin{minipage}[t]{0.48\textwidth}
				\includegraphics[width=2.2in,height=1.8in]{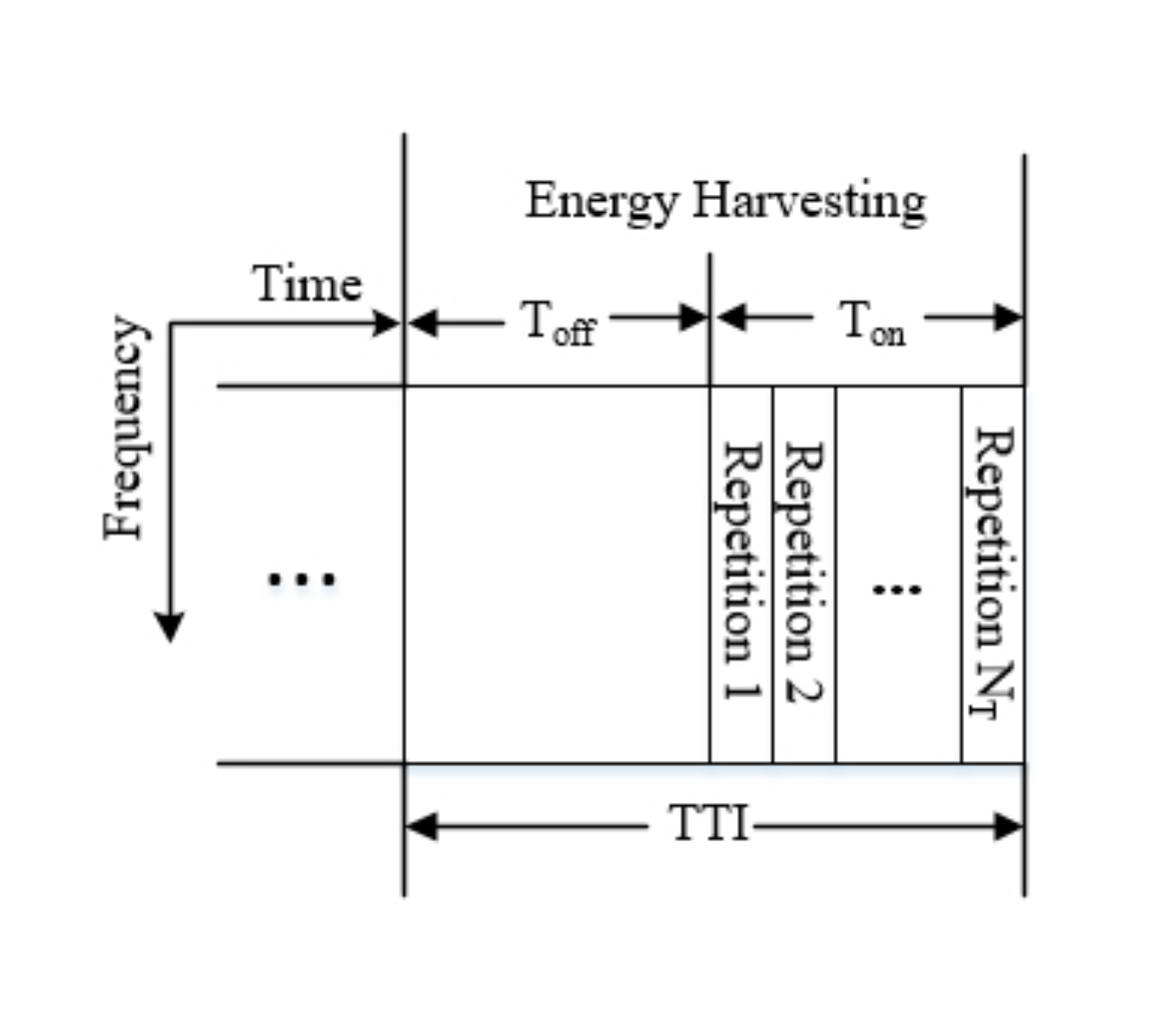}
		\end{minipage}}
		\subfigure[]
		{\begin{minipage}[t]{0.48\textwidth}
				\centering
				\includegraphics[width=3.5in,height=1.8in]{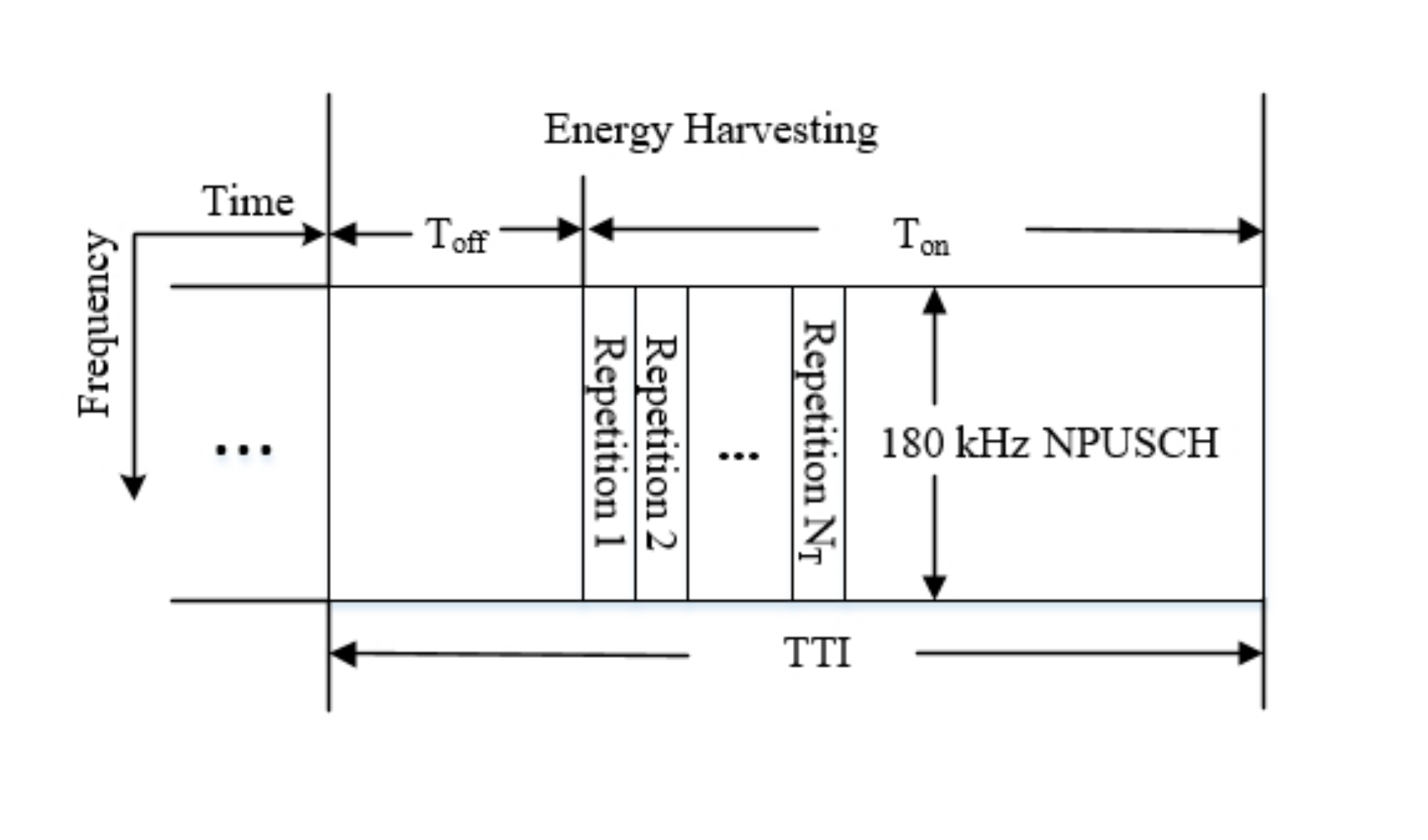}
		\end{minipage}}
		\caption{(a) Structure of failure NPRACH; (b) Structure of successful NPRACH and NPUSCH} 
		\label{fig:2}
	\end{figure}
	Thus, we have incorporated the energy depletion rate for failure RACH and successful RACH cases separately as follows:
	\begin{align}\label{nu_i}
	\begin{cases}
	{\nu_0} = {\mathcal{A}_a}P/E_{0}^{f},  &\mbox{if RACH fails},\\
	{\nu _0} = {\mathcal{A}_a}P/E_{0}^{s}, &\mbox{if RACH succeeds},
	\end{cases}
	\end{align}
	where $P$ is the transmission power of each IoT device and $\mathcal{A}_a$ is the non-empty probability (i.e., IoT device data buffer is non-empty) described in Section IV.
	
	The main notations of the proposed protocol are summarized in TABLE I.
	

	\begin{table}[htbp!]
		\centering
		\caption{Notation Table}
		{\renewcommand{\arraystretch}{1.2}
			{
				\begin{tabular}{|p{1cm}|p{7cm}|}
					\hline
					$\lambda_B$ & The intensity of BSs  \\ \hline $\lambda_D$ & The intensity of IoT devices  \\ \hline
					$h$ & The Rayleigh fading channel power gain  \\ \hline
					$r$ & The distance between an IoT device and its associated BS \\ 
					\hline
					$\alpha$ & The path-loss exponent    \\ \hline
					$N_T$ & The RACH repetition value     \\ \hline
					$\mu_0$  &  The energy arrival rate  \\ \hline $E_0$ & The energy required for each repetition\\ 
					\hline
					$E$ &  The real energy storage capacity  \\ \hline 
					$\nu_0$ &  The energy depletion rate  \\ 
					\hline
					$E_{0}^{RA}$ &  The energy required for the RACH in $each$ $repetition$  \\ \hline
					$E_{0}^{DA}$  &  The energy required for the data transmission in $each$ $repetition$ \\ 
					\hline
					$\mathcal{A}_a$ &  The non-empty probability \\ \hline 
					$P$ & The transmission power of each IoT device \\ 
					\hline
					$M_{0}^{c}$ & The cutoff value  \\ \hline
					$M_0$ & The storage  capacity  \\ 
					\hline
					$T_{0}^{\rm{off}}$ & The time for which the IoT device remains in the \rm{OFF} state  \\ \hline 
					$T_{0}^{\rm{on}}$ & The time for which the IoT device remains in the \rm{ON} state\\
					\hline
					$\sigma ^2$ & The noise power  \\ \hline  
					$\mathcal {I}_0$ & The aggregate  interference of the typical IoT device\\
					\hline	
					$\mathcal Z_{D}$ & The set of interfering IoT devices  \\ \hline 
					$\gamma _{th}$ & The SINR threshold  \\
					\hline	
					$L$ & The number of available preambles are reserved for the contention-based RACH  \\ \hline
					${\lambda _{Da}} $ & The density of active IoT devices choosing the same preamble\\
					\hline 
					$c$ & $c=3.575$ is a constant  
					\\ \hline  $\eta_0$ & The  energy  availability\\
					\hline
				\end{tabular}
			}
		}
		\label{table_accord}
	\end{table}

	\section{Energy Availability of IoT Devices}
	We assume that the energy harvesting processes are independent among the IoT devices, which ensure the independence of the current operational state (\rm{ON} or \rm{OFF}) of the IoT devices. An IoT device toggles its operational state solely on its current energy level.
	Essentially, we focus on a general strategy ${S\{M_{0}^{min},M_{0}^{c}\}}$ with energy storage capacity $M_{0}$ ($0\le M_{0}^{min}\le M_{0}^{c}\le M_0$), where the {randomly chosen} IoT device toggles to \rm{OFF} state when its energy level reaches below specific level $M_{0}^{min}$ and toggles back to \rm{ON} state when its energy level reaches the predefined cutoff value $M_{0}^{c}$, i.e., it has sufficient energy to use. In addition, it should be noted that the cutoff value $M_{0}^{c}$ and $M_{0}^{min}$ can be defined and changed by the network if necessary. 
	
	We note that, it is strictly sub-optimal when the IoT device toggles to \rm{OFF} at $M_{0}^{min}\ne 0$ for this model, since it effectively reduces the storage capacity from $M_0$ to $M_0-M_{0}^{min}$. That is to say, the strategy ${S_1\{M_{0}^{min},M_{0}^{c}\}}$ with energy storage capacity $M_0$ is equivalent to 
	${S_2\{0,M_{0}^c-M_{0}^{min}\}}$ with energy storage capacity ($M_0-M_{0}^{min}$), which are described in Table II. 
	\begin{table}[htbp!]
		\centering
		\caption{ Strategy Summary}
		{\renewcommand{\arraystretch}{0.9}
			\begin{tabular}{|c|c|c|c|}
				\hline
				${S_1}(M_0^{\min },M_0^c)$ & $M_0^{\min }$ & $M_0^{c}$ & $M_0 $\\ \hline
				${S_2}(0,M_0^c - M_0^{\min })$  &  0 & $M_0^c - M_0^{\min }$ & $M_0^{} - M_0^{\min }$  \\ \hline
				${S_3}(M_0^c - {N_T},M_0^c)$  & $M_0^c - {N_T}$ & $ M_0^c$ &  $M_0$ \\ \hline
				${S_4}(0,{N_T})$ & 0 & $N_T$  & {${M_0}$}\\ \hline
			\end{tabular}
		}
		\label{table_accord}
	\end{table}
	In our work, the IoT device will be allowed to transmit only if it harvests sufficient energy for the transmission of at least $N_T$ times of repetitions. Then we set $M_{0}^c\geqslant N_T$ and $M_{0}^{min}=M_{0}^c-N_T\geqslant0$, so we have our strategies as ${S_3\{M_{0}^c-N_T,M_{0}^c\}}$ with energy storage capacity $M_0$.
	Therefore, without loss of generality, the strategies ${S_3\{M_{0}^c-N_T,M_{0}^c\}}$ with energy storage capacity $M_0$ for our work is equivalent to ${S_4\{0,N_T\}}$ with energy storage capacity {$M_0$}. For the brevity of exposition, we denote this strategy by ${S\{N_T\}}$.
	In practical, the probability that the IoT device is available may be different for each device due to the differences in the capability $M_0$ of the energy harvesting modules and the repetition value $N_T$. However, we limit ourselves to the IoT device with the same repetition number $N_T$ and storage capability $M_0$ with the same energy availability probability.

	For strategy ${S\{N_T\}}$, the time for which an IoT device remains in the \rm{ON} state after it toggles from the \rm{OFF} state is given by $T_{0}^{\rm{on}}\{N_T\}$, and the time for which it remains in the \rm{OFF} state after toggling from the \rm{ON} state is given by $T_{0}^{\rm{off}}\{N_T\}$. For notational simplicity, the cutoff value $\{N_T\}$ will be dropped wherever appropriate. 
	It is noticed that both $T_{0}^{\rm{on}}$ and $T_{0}^{\rm{off}}$ are general random variables. 
	We first obtain $T_{0}^{\rm{on}}$ as\cite{name2} 
	\begin{align}\label{T_i}
	{T_{0}^{\rm{on}}\{N_T\} = \inf \{ \tau :{\xi _0}(\tau ) = 0|{\xi _0}(0) = {N_T}\} },
	\end{align}
	where $\xi_0(\tau )$ denotes the current energy level of a {randomly chosen} IoT device at time $\tau$. For this setup, the energy availability of a {randomly chosen} IoT device depends only on the means of $T_{0}^{\rm{on}}$ and $T_{0}^{\rm{off}}$, which is shown as
	\begin{align}\label{avalibility 1}
	{\eta_0} = \displaystyle \frac{{{\mathbb E}[{T_{0}^{\rm{on}}}]}}{{{\mathbb E}[{T_{0}^{\rm{on}}}] + {\mathbb E}[{T_{0}^{\rm{off}}}]}} = \displaystyle  \frac{1}{{1 + {{{\mathbb E}[{T_{0}^{\rm{off}}}]}}/{{{\mathbb E}[{T_{0}^{\rm{on}}}]}}}},
	\end{align}
	and $\mathbb E$$[{T_{0}^{\rm{on}}}]$ is the mean time the randomly chosen IoT device remains in the ON state, and $\mathbb E$$[{T_{0}^{\rm{off}}}]$ is the mean time it remains in the \rm{OFF} state. 
	\begin{proof}
		Let $\{ T_{0}^{\rm{on}}(k)\}$ and $\{ T_{0}^{\rm{off}}(k)\}$ be the sequences of the $k$th cycle of ON and OFF times, respectively. The availability can now be described by the fraction of time the {randomly chosen} IoT device remains in the ON  state as
		\begin{align}\label{avalibility 2}
		{\eta_0} = \mathop {\lim }\limits_{K \to \infty } \displaystyle \frac{{\sum\nolimits_{k = 1}^K T_{0}^{\rm{on}}(k)}}{{\sum\nolimits_{k = 1}^K T_{0}^{\rm{on}}(k) + \sum\nolimits_{k = 1}^K T_{0}^{\rm{off}}(k)}}.
		\end{align}
		Dividing both the numerator and the denominator by $K$ and invoking the law of large numbers, we have the result in (\ref{avalibility 1}).\end{proof}

	Now we need to calculate the mean ON time $\mathbb E$$[{T_{0}^{\rm{on}}}]$ and the mean \rm{OFF} time $\mathbb E$$[{T_{0}^{\rm{off}}}]$. For \rm{OFF} time, according to \cite{pishro2016introduction}, we have $\mathbb E[{T_{0}^{\rm{off}}}] = {{{N_T}}}/{{{\mu_0}}}$ (i.e., the time required to harvest $N_T$ units of energy), which is the sum of $N_T$ exponentially
	distributed random variables, each with mean $1/\mu_0$. 
	Substituting into (\ref{avalibility 1}), 
	we obtain 
	\begin{align}\label{avalibility 3}
	{\eta_0} =\displaystyle \frac{1}{{1 + {{{N_T}}}/({{{\mu _0}{\mathbb E}[{T_{0}^{\rm{on}}}]}})}}.
	\end{align}
	
	To derive the mean \rm{ON} time $\mathbb E[{T_{0}^{\rm{on}}}]$, we first need to obtain the transmission matrix $Q_0$ for the birth-death process corresponding to the {randomly chosen} IoT device. According to the Kolmogorov differential equations \cite{karlin2014first}, $Q_0$=
	\begin{align}\label{generate}
	\begin{bmatrix}
	-\mu _0  &  \mu _0  &    0       & \cdots   &   0    &   0   &   0 \\
	\nu _0   &-\mu _0-\nu _0  &  \mu _0& \cdots &   0    &   0 &   0  \\
	0&\nu _0&- \mu _0-\nu _0&\cdots &   0    &   0 &   0  \\
	\vdots    &      \vdots &      \vdots     &            \ddots    &   \vdots  &    \vdots&    \vdots \\
	0    &     0          &     0      &     \cdots     & {\nu _0}& - \mu _0-\nu _0 &\mu _0\\
	0    &     0          &     0      &    \cdots &    0&\nu _0&- {\nu _0}
	\end{bmatrix},
	\end{align}
	where the first column corresponds to the energy state 0 and the states are in the ascending order.
	Then we obtain the following \textbf{Lemma 1.} from \cite{name2}. 
	
	\begin{lemma}
		\textnormal{(Mean Hitting Time).} 
		The expected hitting time of state $1$ (energy level $0$) starting from state m+$1$ (energy level m $\ne 0$) is
		\begin{align}\label{ON TIME}
		{\mathbb E}[{T_{0}^{\rm{on}}}(m)] = \left( {{{( - {{B}_0})}^{ - 1}}{\mathbbm 1})} \right)(m),
		\end{align}
		where $B_0$ is a defined $M_0 \times M_0$ matrix to be the restriction of matrix $Q_0$ to the set $M\setminus\{0\}$ i.e., $B_0=(Q_0(m,n),m\ne 0$, n $\ne 0)$, $-B_0$ is invertible and $\mathbbm 1$ is a column vector of all $1$'s. 
		
	\end{lemma}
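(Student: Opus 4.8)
The plan is to compute the mean hitting time by first-transition analysis on the continuous-time Markov chain $\{\xi_0(\tau)\}$ and then to recast the resulting equations as a single matrix equation involving $B_0$. Let $k_m = \mathbb{E}[T_0^{\mathrm{on}}(m)]$ denote the expected time to reach energy level $0$ starting from energy level $m$, with the convention $k_0 = 0$ since the target is already reached. The goal is to show that the vector $\mathbf{k} = (k_1,\dots,k_{M_0})^{\top}$ solves $(-B_0)\mathbf{k} = \mathbbm 1$; inverting then gives $\mathbf{k} = (-B_0)^{-1}\mathbbm 1$, whose $m$th coordinate is exactly (\ref{ON TIME}). Evaluating at $m = N_T$ recovers the mean of the hitting time defined in (\ref{T_i}).

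First I would condition on the first jump of the chain. From any energy level $m \neq 0$ the sojourn time is exponential with rate $-Q_0(m,m)$ (equal to $\mu_0 + \nu_0$ at interior states and $\nu_0$ at the top state $m = M_0$), and on leaving, the chain moves to a neighbouring level $n$ with probability $Q_0(m,n)/(-Q_0(m,m))$. By the memorylessness of the exponential holding time together with the strong Markov property, $k_m = \frac{1}{-Q_0(m,m)} + \sum_{n \neq m} \frac{Q_0(m,n)}{-Q_0(m,m)}\,k_n$. Multiplying through by $-Q_0(m,m)$ and collecting terms yields $\sum_{n} Q_0(m,n)\,k_n = -1$ for every $m \neq 0$.

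Next I would use $k_0 = 0$ to drop the $n = 0$ term, so that for $m,n \in \{1,\dots,M_0\}$ the equations become $\sum_{n \neq 0} Q_0(m,n)\,k_n = -1$, which is precisely $B_0\mathbf{k} = -\mathbbm 1$ with $B_0$ the restriction of $Q_0$ to $M\setminus\{0\}$ as defined in the statement. Solving gives $\mathbf{k} = (-B_0)^{-1}\mathbbm 1$, and reading off the $m$th entry produces the claimed expression for $\mathbb{E}[T_0^{\mathrm{on}}(m)]$.

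The main obstacle is to justify that $-B_0$ is invertible, so that the linear system has a unique finite solution. I would establish this either probabilistically or algebraically. Probabilistically, since the birth--death chain is irreducible on the finite space $\{0,1,\dots,M_0\}$ and level $0$ is reachable from every state through successive depletions at rate $\nu_0$, absorption at $0$ is certain with finite mean, so the killed sub-generator $B_0$ governs a transient process and $(-B_0)^{-1} = \int_0^{\infty} e^{B_0 t}\,dt$ converges. Algebraically, $-B_0$ is an irreducible, weakly diagonally dominant matrix (nonpositive off-diagonals, positive diagonal) whose row for energy level $1$ is \emph{strictly} dominant, because deleting column $0$ removes the down-rate $\nu_0$ from that row's off-diagonal mass while leaving its diagonal $\mu_0 + \nu_0$ intact; an irreducibly diagonally dominant matrix is nonsingular. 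Either argument secures invertibility and completes the proof of (\ref{ON TIME}).
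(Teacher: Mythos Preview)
Your argument is correct: the first-step (first-transition) analysis yields the linear system $B_0\mathbf{k}=-\mathbbm 1$, and both of your invertibility arguments are sound for this finite tridiagonal sub-generator.

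However, your route differs from what the paper actually does. The paper does not derive the hitting-time identity \eqref{ON TIME} at all; it quotes Lemma~1 directly from a reference. The material in Appendix~A, although labeled ``a proof of Lemma~1,'' is in fact the derivation of the explicit entries of $(-B_0)^{-1}$ (i.e., the justification of equation~\eqref{hit time}): it sets $A_0=-B_0$, writes $A_0A_0^{-1}=I$ column by column, solves the tridiagonal system for each column (finding, e.g., $A_1=(1/\nu_0,\dots,1/\nu_0)^{\top}$), and thereby produces the closed form that is then summed to obtain \eqref{ON TIME 2}. So the paper's ``proof'' is purely a linear-algebra computation of the inverse, taking the probabilistic identity $\mathbb{E}[T_0^{\rm on}(m)]=((-B_0)^{-1}\mathbbm 1)(m)$ for granted. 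Your approach, by contrast, actually establishes that identity via the strong Markov property and first-jump conditioning, and supplies an invertibility argument the paper omits. What the paper's computation buys is the explicit formula \eqref{hit time} needed downstream for \eqref{ON TIME 2} and Theorem~1; what your argument buys is a self-contained justification of the lemma itself.
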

	
	Now we can obtain a closed-form expression for the $(m, n)^{th}$ element in $(-B_0)^{-1}$ by
	\begin{align}\label{hit time}
	{( - {B_0})^{ - 1}}(m,n) = \frac{1}{{\nu _0^n}}\sum\limits_{k = 1}^{\min (m,n)} {\mu _0^{n - k}} \nu _0^{k - 1}.
	\end{align}
	\begin{proof}
		See Appendix A.
	\end{proof}
	
	Substituting (\ref{hit time}) into (\ref{ON TIME}) and then plugging into $m=N_T$, we have the mean \rm{ON} time for strategy $S\{N_T\}$ with energy storage capacity {$M_0$} as
	\begin{align}\label{ON TIME 2}
	&{\mathbb E}[T_{0}^{\rm{on}}({N_T})] \\ \nonumber
	&=\displaystyle\frac{{{{\Big( \displaystyle\frac{{{\mu _0^{}}}}{{\nu _0^{}}} \Big)}^{M_0 + 1}}\Big( {1 - {{\Big( \displaystyle\frac{{{\mu _0^{}}}}{{\nu _0^{}}} \Big)}^{ - {N_T}}}} \Big)}}{{\Big( {1 -  \displaystyle\frac{{{\mu _0^{}}}}{{\nu _0^{}}} } \Big)}} \displaystyle\frac{1}{{\nu _0^{} - \mu _0^{}}} 
	- \frac{{{N_T}}}{{\mu _0^{} - \nu _0^{}}}.
	\end{align}
	
	Submitting (\ref{ON TIME 2}) into (\ref{avalibility 3}), we obtain the energy availability in the following \textbf{Theorem 1.}.
	\begin{theorem}
		\textnormal{(Energy Availability).} The energy availability of a {randomly chosen} IoT device  is given by
		\begin{align}\label{final AVA}
		{\eta _0} 
		= \displaystyle\frac{1}{{1 + \displaystyle\frac{{{N_T}{{\Big(1- \displaystyle\frac{{{\mu _0^{}}}}{{\nu _0^{}}} \Big)}^2}}}{{{{\Big( \displaystyle\frac{{{\mu _0^{}}}}{{\nu _0^{}}} \Big)}^{{M_0} + 2}}\Big( {1 - {{\Big( \displaystyle\frac{{{\mu _0^{}}}}{{\nu _0^{}}} \Big)}^{ - {N_T}}}} \Big) + \displaystyle\frac{{{\mu _0^{}}}}{{\nu _0^{}}} \Big(1- \displaystyle\frac{{{\mu _0^{}}}}{{\nu _0^{}}} \Big){N_T}}}}}.
		\end{align}
	\end{theorem}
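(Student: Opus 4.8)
The plan is to obtain the result by direct substitution of the mean ON time from (\ref{ON TIME 2}) into the compact availability formula (\ref{avalibility 3}), so that once Lemma~1 and the closed form (\ref{hit time}) have supplied $\mathbb{E}[T_0^{\rm{on}}(N_T)]$, the entire argument is algebraic. First I would introduce the shorthand $\rho = \mu_0/\nu_0$ to keep the manipulations legible and rewrite (\ref{avalibility 3}) as $\eta_0 = 1/\bigl(1 + N_T/(\mu_0\,\mathbb{E}[T_0^{\rm{on}}])\bigr)$, so the only remaining task is to evaluate the quotient $N_T/(\mu_0\,\mathbb{E}[T_0^{\rm{on}}])$ in closed form and fold it back into the reciprocal.

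Next I would simplify the two summands of (\ref{ON TIME 2}) by absorbing the denominators $\nu_0-\mu_0$ and $\mu_0-\nu_0$ into factors of $(1-\rho)$. Writing $\nu_0-\mu_0 = \nu_0(1-\rho)$ and $\mu_0-\nu_0 = -\nu_0(1-\rho)$, the first term collapses to $\rho^{M_0+1}(1-\rho^{-N_T})/[\nu_0(1-\rho)^2]$, while the second term flips sign and becomes $+N_T/[\nu_0(1-\rho)]$. Multiplying the resulting sum by $\mu_0 = \rho\nu_0$ then cancels the common $1/\nu_0$ and raises the power of $\rho$ by one, giving $\mu_0\,\mathbb{E}[T_0^{\rm{on}}] = \rho^{M_0+2}(1-\rho^{-N_T})/(1-\rho)^2 + \rho N_T/(1-\rho)$.

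Finally I would form $N_T/(\mu_0\,\mathbb{E}[T_0^{\rm{on}}])$ and clear the nested fractions by multiplying numerator and denominator by $(1-\rho)^2$, producing $N_T(1-\rho)^2 / \bigl[\rho^{M_0+2}(1-\rho^{-N_T}) + \rho N_T(1-\rho)\bigr]$. Substituting this quotient into the reciprocal form of $\eta_0$ and re-expanding $\rho$ as $\mu_0/\nu_0$ reproduces (\ref{final AVA}) exactly. No further machinery is needed; the only delicate point, and hence the main obstacle such as it is, is bookkeeping: tracking the sign of $(1-\rho)$ against $(\rho-1)$ across the two summands of (\ref{ON TIME 2}) and verifying that multiplication by $\mu_0$ promotes the exponent of $\mu_0/\nu_0$ from $M_0+1$ to the $M_0+2$ appearing in the target denominator.
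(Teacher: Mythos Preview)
Your proposal is correct and follows exactly the route the paper takes: the paper simply states that substituting (\ref{ON TIME 2}) into (\ref{avalibility 3}) yields (\ref{final AVA}), and your write-up supplies the algebraic bookkeeping (the $\rho=\mu_0/\nu_0$ shorthand, the sign tracking on $(1-\rho)$, and the exponent shift from $M_0+1$ to $M_0+2$ upon multiplying by $\mu_0$) that the paper leaves implicit. There is nothing to add or correct.
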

	In (\ref{final AVA}), it can be shown that the energy availability increases with increasing $\mu_0$ and {$M_0$}. For illustration, the relationships between $\eta_0$ and $N_T$, {$M_0$}, $\mu_0$ are analyzed in Section V.

	\section{RACH Transmission Success Probability} 
	In the NB-IoT repetition scheme, an active IoT device will repeat the same preamble $N_T$ times (i.e., the dedicated repetition value). In each repetition, a preamble is composed of four symbol groups transmitted without gaps, where the first preamble symbol group is transmitted via a sub-carrier determined by pseudo-random hopping (i.e., the hopping depends on the current repetition time and the Narrowband physical Cell ID, a.k.a NCellID\cite{schlienz2016narrowband}), and the following three preamble symbol groups are transmitted via sub-carriers determined by the fixed size frequency hopping\cite{Tel2016}. This frequency hopping algorithm is designed in a way that different selections of the first subcarrier lead to hopping schemes which never overlap. 
	Specifically, if two or more IoT devices chose the same first sub-carrier in a single RAO, the following sub-carriers (i.e., in the same RAO) would be same, due to that these two hopping algorithms lead to one-to-one correspondences between the first sub-carrier and the following sub-carriers (i.e., these IoT devices either collide on the full set or not collide at all in a single RAO). In this setup, the RACH success refers to the preamble being successfully transmitted to the associated eNB (i.e., received SINR is greater than the SINR threshold) and no collision occurs (i.e., no other IoT devices successfully transmit a same preamble to the typical eNB simultaneously). 
	
	For an IoT device to be able to initiate an uplink transmission, the energy harvested by this device should be sufficient to perform RACH and data transmission. We have defined the energy availability of a {randomly chosen} IoT device after harvesting enough energy as $\eta _0$ in \textbf{Theorem 1}. In this section, we first formulate the SINR outage condition to drive the preamble transmission success probability and then facilitate the analysis of the  RACH success probability.
	
	\subsection{SINR Definition} 
	
	Recall that  each IoT device transmits a {randomly chosen} preamble to its associated eNB to request for channel resources, where different preambles represent orthogonal sub-channels, and thus only IoT devices choosing the same preamble have correlations.
	The RACH analysis in this work needs to take into account both the inter- and intra-cell interference\footnote{
		In LTE, the PRACH root sequence planning is used to mitigate inter-cell interference among neighboring  BSs (i.e., neighboring BSs could use different roots to generate preambles) \cite{dahlman20134g}. However, as in \cite{jiang2017random}\cite{7917340}, we focus on providing a general analytical framework of cellular networks considering both the inter- and intra-interference without using PRACH root sequence planning.
		That is to say, we consider intra-cell interference due to the fact that the IoT devices in the same cell  may choose the same preamble and we consider the inter-cell interference due to the fact that the IoT devices in different cells share the preamble sequence pool among eNBs. 
		We also obtain  the  approximation results of the networks with perfect PRACH root sequence planning (e.g., no inter-cell interference from any cells) in Lemma. 3.  But the extension taking into account  PRACH root sequence planning will be treated in future works. 
	}.
	The received power at a typical eNB from a {randomly chosen} IoT device of interest is therefore
	$P_{0}=Ph_0r_0^{-\alpha}$, where $h_0$ and $r_0$ are the channel power gain and the distance from the typical IoT device to its associated eNB respectively.
	Using the received power over the link of interest and the interference power, the SINR received at the typical eNB at the origin can be written as
	{
		\begin{align}\label{SINR}
		{\rm{SINR}}({r_0}) = \frac{{P{h_0}r_0^{ - \alpha }}}{{\mathcal {I}_0^{{\mathop{ intra}} } + \mathcal {I}_0^{{\mathop{ inter}} }+{\sigma ^2}}}
		=\frac{{P{h_0}r_0^{ - \alpha }}}{{\mathcal {I}_0  +{\sigma ^2}}},
		\end{align}
		where $\sigma ^2$ is noise power, and $\mathcal {I}_0$ is aggregate  interference of the typical IoT device with}
	\begin{align}\label{INTRA}
	\mathcal {I}_0^{{\mathop{}} } = \sum\limits_{j \in {\mathcal {Z}_{{\mathop{ D}} }}}^{} {P{h_j}{r_j}^{ - \alpha }}. 
	\end{align}
	
	In (\ref{INTRA}),  $h_j$ and $r_j$ are channel power gain and the distance from the interfering IoT devices to the typical eNB, $P$ is the transmission power of the IoT devices, and $\mathcal Z_{D}$ is the set of interfering IoT devices for the typical IoT device. 
	We note that only the active IoT devices choosing the same preamble will generate interference.
	The  density of active IoT devices choosing the same preamble is obtained as follows.
	
	\begin{remark}\label{remarkdensity}
		\textnormal{(The density of active IoT devices choosing the same preamble)}. Note that inactive IoT devices (those without enough energy or data packets in buffer) do not attempt RACH, such that they do not generate interference. According to the repetition scheme mentioned earlier, each active IoT device will contend on all $L=48$ sub-carriers due to the single repetition value configuration, and thus each preamble has an equal probability $(1/L)$ to be chosen. As only active IoT devices will try to request uplink channel resources, we define the non-empty data packets probability of each IoT device $\mathcal{A}_a \in [0, 1]$ following a Bernoulli process. We also have the energy availability $\eta_0$ of each IoT device, then according to the thinning process \cite{kingman1993poisson}, the density of active IoT devices choosing the same preamble can be expressed as
		\begin{align}\label{density}
		{\lambda _{Da}} = \mathcal{A}_a{\eta _0}{\lambda _D}/L.
		\end{align}
	\end{remark}

	\subsection{RACH Success Probability}
	{
		We formulate the RACH success probability under both SINR outage and collision conditions.
		We perform the analysis on an eNB associating with a {randomly chosen} active IoT device in terms of the RACH success probability. The RACH success refers to the preamble being successfully transmitted to the associated eNB (i.e., received SINR is greater than the SINR threshold) and no collision occurs (i.e., no other IoT devices successfully transmits a same preamble to the typical eNB simultaneously). 
	}
	First, we formulate the SINR outage condition. The typical IoT device transmits a preamble successfully if any repetition successes, and in a single repetition, a preamble is successfully received at the associated eNB if its all four received SINRs are above the SINR threshold $\gamma_{th}$. 
	Thus, the preamble transmission success probability of a {randomly  chosen} IoT device under $N_T$ repetitions  is expressed as
	\begin{align}\label{preamble1}
	{{\mathbb P}_{S,0}}[{N_T}] = 1 - \underbrace {\prod\limits_{{n_T} = 1}^{{N_T}} {\Big( {1 - \underbrace {{{\mathbb P}_0}[{\theta _{{n_T}}}({r_0})|r_0]}_{\rm I}} \Big)} }_{{\rm I}{\rm I}}.
	\end{align}
	
	I is the probability that all four (a preamble consists of four preamble symbol groups) time-correlated preamble symbol groups in the $n_T$th repetition are successfully transmitted,
	II is the probability that all $N_T$ repetitions of a preamble transmission are failed, and
	\begin{align}\label{FOUR SINR}
	{\theta _{{n_T}}}({r_0}) = \big\{&{\rm{SINR}}_{{n_T},1}({r_0}) \ge {\gamma _{th}}, {\rm{SINR}}_{{n_T},2}({r_0}) \ge {\gamma _{th}},\nonumber \\  &{\rm{SINR}}_{{n_T},3}({r_0}) \ge {\gamma _{th}}, {\rm{SINR}}_{{n_T},4}({r_0}) \ge {\gamma _{th}} \big\}.
	\end{align}
	In \eqref{FOUR SINR}, $\gamma _{th}$ is the SINR threshold, and
	SINR$_{{n_T},1}({r_0})$, SINR$_{{n_T},2}({r_0})$, SINR$_{{n_T},3}({r_0})$ and SINR$_{{n_T},4}({r_0})$ are the received SINRs of the four symbol groups in the $n_T$th repetition of the typical IoT device.
	Based on the Binomial theorem, the preamble transmission success probability in (\ref{preamble1}) can be rewritten as
	\begin{align}\label{SI}
	&{{\mathbb{P}}_{S,0}}[{N_T}]=\nonumber \\  &\sum\limits_{{n_T} = 1}^{{N_T}} {{( - 1)}^{{n_T} + 1}} {\Big( \begin{array}{l}
		{N_T}\\{n_T}\end{array}  \Big)}{{\mathbb{P}}_0}[{\theta _1}({r_0}),{\theta _2}({r_0}), \cdots ,{\theta _{{n_T}}}({r_0})|r_0] ,
	\end{align}
	where $\bigg( \begin{array}{l}
	{N_T}\\
	{n_T}
	\end{array} \bigg) =\displaystyle \frac{{{N_T}!}}{{{n_T}!( {{N_T} - {n_T}} )!}}$ is the binomial coefficient, and $\displaystyle{{\mathbb P}_0[{\theta _1}({r_0}),{\theta _2}({r_0}), \cdots, {\theta _{{n_T}}({r_0})}]}$ is the probability that all of $4\times n_T$ (a preamble consists of four preamble symbol groups) time-correlated preamble symbol groups are successfully transmitted.

	We note that the preamble transmission success probability in \eqref{SI} depends on the transmission distance $r_0$.
	Taking into account that each IoT device associates to its geographically nearest eNB, $r_0$ is the minimum distance between the eNB and the typical IoT device.
	The PDF of the shortest distance between any point BS and the IoT device with radius $r_0$ is \cite{7852435}
		\begin{align}\label{r_0}
		{f_{{R_0}}}({r_0}) \approx 2\varepsilon\pi{\lambda_B}r_0\exp(-\varepsilon{\lambda_B}{\pi}r_0^2),
		\end{align}
		where $\varepsilon=1$ when $\lambda_{Da}\ll\lambda_B$ and $\varepsilon=1.25$ when $\lambda_{Da}\gg\lambda_B$.

	For ease of presentation, we set $l = 4\times n_T$, and the probability that all of $4\times n_T$  preamble symbol groups are successfully transmitted is presented in the following \textbf{Lemma 2.} 
		\begin{lemma}
			The probability that all of $4\times n_T$ received {\rm{SINRs}} at the eNB from a {randomly chosen} IoT device exceed a certain threshold $\gamma_{th}$ is expressed as
			\begin{align}\label{conditionr}
			&p_0({\gamma _{th}}) = {\mathbb E_{{R_0}}}\Big[{\mathbb P}_0[ {{\theta _1}(r_0),{\theta _2}(r_0),...,{\theta _{{n_T}}}(r_0)| {{r_0}}} ]\Big]
			\nonumber\\
			&= \int{{\mathbb P}_0[ {{\theta _1}(r_0),{\theta _2}(r_0),...,{\theta _{{n_T}}}(r_0)| {{r_0}}} ]} f(r_0)d{r_0}\nonumber \\			
			&=\int_0^\infty  {} 2\varepsilon\pi{\lambda_B}r_0\exp\Big(-{\varepsilon\lambda_B}{\pi}r_0^2{{-\frac{{{l\gamma _{th}\sigma ^2}r_0^\alpha }}{P}}}\Big)\nonumber \\			
			&\times\exp \Big( \overset{\text{}}  { - 2\pi {\lambda _{Da}}\int_{0}^\infty  {\Big[ {1 - \big({}{{1 + {\gamma _{th}}r_0^\alpha{y^{ - \alpha }}}}\big)^{-l}} \Big]} ydy} \Big)d{r_0}.
			\end{align}
		\end{lemma}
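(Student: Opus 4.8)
The plan is to evaluate the conditional joint success probability by peeling off the randomness in the order: desired-signal fading, interferer fading, interferer point process, and finally the link distance $r_0$. First I would condition on both $r_0$ and the realization of the interfering set $\mathcal{Z}_{D}$ (i.e. on all interferer positions $\{r_j\}$), and rewrite the event $\{\theta_1(r_0),\ldots,\theta_{n_T}(r_0)\}$ as the intersection over all $l=4n_T$ symbol groups of the events $\{P h_0^{(t)} r_0^{-\alpha}\ge\gamma_{th}(\mathcal{I}_0^{(t)}+\sigma^2)\}$, where $h_0^{(t)}$ and $\mathcal{I}_0^{(t)}$ denote the desired-signal fading and the aggregate interference experienced by the $t$-th symbol group, consistent with the SINR definition in \eqref{SINR}--\eqref{INTRA}.

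Since all channel gains are i.i.d. in time, the desired-signal fadings $h_0^{(t)}$ are independent across the $l$ symbol groups, so the conditional joint probability factorizes into a product of $l$ terms. Using $h_0^{(t)}\sim\mathrm{Exp}(1)$, each factor equals $\exp(-s(\mathcal{I}_0^{(t)}+\sigma^2))$ with $s=\gamma_{th}r_0^\alpha/P$, and their product gives $\exp(-s\sum_{t=1}^{l}\mathcal{I}_0^{(t)}-ls\sigma^2)$. The noise contribution already supplies the factor $\exp(-l\gamma_{th}\sigma^2 r_0^\alpha/P)$ that appears in the statement.

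The crux is the temporally correlated term $\sum_{t=1}^{l}\mathcal{I}_0^{(t)}$: because each IoT device stays spatially static during the repetitions, the positions $\{r_j\}$ are common to all $l$ symbol groups, whereas the fadings $h_j^{(t)}$ are independent across both $j$ and $t$. Writing $\sum_t\mathcal{I}_0^{(t)}=\sum_j P r_j^{-\alpha}\sum_t h_j^{(t)}$ and taking the expectation over the interferer fadings, each interferer contributes the product of $l$ identical $\mathrm{Exp}(1)$ moment-generating factors, namely $(1+\gamma_{th}r_0^\alpha r_j^{-\alpha})^{-l}$. Averaging this product over the PPP $\mathcal{Z}_{D}$ of density $\lambda_{Da}$ (from Remark~\ref{remarkdensity}) via the probability generating functional, and converting to polar coordinates, yields exactly the second exponential $\exp(-2\pi\lambda_{Da}\int_0^\infty[1-(1+\gamma_{th}r_0^\alpha y^{-\alpha})^{-l}]y\,dy)$. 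Finally I would deconditioned on the link distance by integrating against the nearest-eNB distance PDF $f_{R_0}(r_0)$ in \eqref{r_0}, which produces the prefactor $2\varepsilon\pi\lambda_B r_0\exp(-\varepsilon\lambda_B\pi r_0^2)$ and completes the stated integral.

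I expect the main obstacle to be the correlation structure: the $l$ SINRs are \emph{not} independent, since the interferer locations are shared across symbol groups, so one cannot simply raise a single-symbol success probability to the $l$-th power. The resolution is the order of conditioning, fixing the interferer positions first and then exploiting the temporal independence of the fadings, which is precisely what turns the per-interferer contribution into an $l$-th power inside the PGFL integral. A secondary technical point worth checking is the convergence of the inner $y$-integral: the integrand is integrable at the origin, while convergence at infinity requires the standard path-loss condition $\alpha>2$, under which the integrand decays like $y^{1-\alpha}$.
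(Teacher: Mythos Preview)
Your proposal is correct and follows essentially the same route as the paper's own proof in Appendix~B: condition on $r_0$, use the i.i.d.\ $\mathrm{Exp}(1)$ desired-link fadings to turn the joint event into $\exp\big(-s\sum_{t=1}^{l}\mathcal{I}_0^{(t)}-ls\sigma^2\big)$ with $s=\gamma_{th}r_0^\alpha/P$, average over the interferer fadings to obtain the per-interferer factor $(1+\gamma_{th}r_0^\alpha r_j^{-\alpha})^{-l}$, apply the PGFL of the PPP of density $\lambda_{Da}$, and finally integrate against $f_{R_0}(r_0)$ from \eqref{r_0}. Your explicit discussion of the temporal correlation (shared interferer locations, independent fadings) and the $\alpha>2$ convergence check are nice clarifications that the paper leaves implicit, but they do not constitute a different approach.
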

		\begin{proof}
			See Appendix B.
		\end{proof}

	Next, we formulate the RACH  success probability taking into account both the SINR outage and the collision. 
	The RACH success probability  is represented in the following \textbf{Theorem 2.}
	
	\begin{theorem}
		In the energy harvesting NB-IoT network, the RACH success probability of a {randomly chosen} IoT device  is derived as
		\begin{align}\label{rach_1}
		{{\cal P}_0}&={{\mathbb{E}}_N}{\Big[ { {{{\mathbb P}_{S,0}}[{N_T}]}{\prod\limits_{j = 1}^{n} {\Big(1 - {{\mathbb P}_{S,j}}[{N_T}]\Big)}\Big|N=n }} \Big]}\nonumber\\
		&=\sum\limits_{n = 0}^\infty\Big\{\underbrace{{\mathbb P}[N = n]}_{{\rm I}} { {\underbrace {{{\mathbb P}_{S,0}}[{N_T}]}_{{\rm I}{\rm I}}\underbrace {\prod\limits_{j = 1}^{n} {\Big(1 - {{\mathbb P}_{S,j}}[{N_T}]\Big)}\Big|N=n }_{{\rm I}{\rm I}{\rm I}}} \Big\}},
		\end{align}
		where
		\begin{align}\label{N=n}
		\mathbb{P}[N = n] = \displaystyle\frac{{{c^{(c + 1)}}\Gamma (n + c + 1){{\big( {{{{\lambda _{Da}}}}/{{{\lambda _B}}}} \big)}^n}}}{{\Gamma (c + 1)\Gamma (n + 1){{\big( {{{{\lambda _{Da}}}}/{{{\lambda _B}}} + c} \big)}^{n + c + 1}}}},
		\end{align}
		and
		\begin{align}\label{PRE_1}
		&{{{\mathbb P}_{S,0}}[{N_T}]}
		= \sum\limits_{{n_T} = 1}^{{N_T}} {{( - 1)}^{{n_T} + 1}} {\Big( \begin{array}{l}
			{N_T}\\{n_T}\end{array}  \Big)}\nonumber\\
		&\times\int_0^\infty  {} 2\varepsilon\pi{\lambda_B}r_0\exp\Big(-\varepsilon{\lambda_B}{\pi}r_0^2{{-\frac{{{l\gamma _{th}\sigma ^2}r_0^\alpha }}{P}}}\Big)\nonumber\\
		&\times\exp \Big( \overset{\text{}}  { - 2\pi {\lambda _{Da}}\int_{0}^\infty  {\Big[ {1 - \big({}{{1 + {\gamma _{th}}r_0^\alpha{y^{ - \alpha }}}}\big)^{-l}} \Big]} ydy} \Big)d{r_0}.
		\end{align}
		Part I the Probability Mass Function (PMF) of the number of intra-cell interfering\footnote{
			We derive the PMF of the number ($N$) of the other interfering IoT devices in the Voronoi cell to which a randomly chosen IoT device belongs, i.e., there are $n$ interfering IoT devices ($n+1$  IoT devices) in one cell. According to the Slivnyak’s Theorem \cite{haenggi2012stochastic}\cite{yu2013downlink}, the locations of interfering IoT devices follow the Palm distribution of $\Phi_{D_a}$, which is the same as the original $\Phi_{D_a}$.}  IoT devices for a typical BS $N=n$  derived following \cite[Eq.(3)]{yu2013downlink},
		where $c = 3.575$ is a constant related to the approximate PMF of the PPP Voronoi cell and $\Gamma (\cdot)$ is the gamma function.
		Part II is the preamble transmission success probability of a {randomly chosen} IoT device obtained by substituting (\ref{conditionr}) into (\ref{SI}).
		Part III is the preamble transmission failure probability that the transmissions from other $n$ intra-cell interfering IoT devices are not successfully received by the BS, i.e., the non-collision probability of the typical IoT device conditioning on $n$.
	\end{theorem}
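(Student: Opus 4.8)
The plan is to write the RACH success event as the intersection of the tagged device's own SINR success and the absence of any colliding success among its competitors, and then to average over the random number of competitors that share the preamble in the typical cell. First I would condition on $N=n$, the number of \emph{other} active IoT devices that both select the same preamble and associate with the same eNB as the tagged device. Because distinct preambles occupy orthogonal sub-channels, only same-preamble co-cell devices can collide, so conditioning on $N$ isolates all the relevant collision events. Writing RACH success as the joint event that the tagged preamble is decoded and that no competitor is decoded, the law of total probability gives $\mathcal{P}_0=\sum_{n=0}^{\infty}\mathbb{P}[N=n]\,\mathbb{P}[\text{success}\mid N=n]$, which is exactly the outer structure of \eqref{rach_1}.

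Next I would factor the conditional probability. The tagged device is decoded with probability $\mathbb{P}_{S,0}[N_T]$ given in \eqref{SI}, and, treating the per-device decoding events as conditionally independent given $N=n$ (the standard stochastic-geometry approximation, supported by the independent Rayleigh fading and by the Poisson competitor field produced by the thinning in Remark~\ref{remarkdensity}), the non-collision probability factors as $\prod_{j=1}^{n}\big(1-\mathbb{P}_{S,j}[N_T]\big)$, the product of the competitors' individual failure probabilities. This reproduces Parts II and III of \eqref{rach_1}. By the symmetry of the Palm distribution every $\mathbb{P}_{S,j}[N_T]$ equals $\mathbb{P}_{S,0}[N_T]$, so the product collapses to $\big(1-\mathbb{P}_{S,0}[N_T]\big)^{n}$, though I would retain the factored form for transparency.

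Then I would supply the occupancy law $\mathbb{P}[N=n]$ of Part~I. Since the eNBs form a homogeneous PPP, the tagged device lies in a typical Voronoi cell whose area obeys the known approximate gamma law with shape parameter $c=3.575$; mixing the same-preamble active-device density $\lambda_{Da}$ of \eqref{density} over this random cell area yields a gamma--Poisson (negative-binomial-type) count, and Slivnyak's theorem guarantees that conditioning on the presence of the tagged device leaves the law of the remaining points unchanged. Carrying out this mixture produces \eqref{N=n} as a function of the ratio $\lambda_{Da}/\lambda_B$. Substituting \eqref{N=n} together with \eqref{SI} into the conditioned product completes the evaluation.

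The hard part will be the conditional-independence step in the factorization, because the competing devices are precisely the interferers that enter one another's SINR, so their decoding events are genuinely correlated and the product form is exact only under this heuristic. I would flag this approximation as the one carrying the result and rely on the simulation framework promised in the contributions to confirm that the induced error is small across the tested repetition values $N_T$.
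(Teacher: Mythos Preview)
Your proposal is correct and mirrors the paper's own argument: the paper does not give a standalone proof of this theorem but embeds the same three-part justification in the statement, namely conditioning on the number $N$ of same-preamble intra-cell competitors via the Voronoi-cell PMF of \cite{yu2013downlink}, invoking the preamble success probability from Lemma~2 substituted into \eqref{SI} to obtain \eqref{PRE_1}, and multiplying by the competitors' failure probabilities for non-collision. Your explicit flagging of the conditional-independence factorization as an approximation validated by simulation is, if anything, more candid than the paper itself.
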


	\begin{remark}
		It is evident from \eqref{PRE_1} that the transmission success probability (II in \eqref{rach_1}) of the typical IoT device increases, whereas the non-collision probability (III in \eqref{rach_1}) decreases with increasing the repetition value $N_T$ and decreasing the the received {\rm SINR} threshold $\gamma_{\rm th}$.
		Therefore, there exists a tradeoff between transmission success probability and non-collision probability.
		For illustration, the relationship among RACH access success probability $(\mathcal{P}_0)$, the transmission success probability $(\mathcal{P}_0$ with III=1), and the non-collision probability $(\mathcal{P}_0$ with II=1) versus repetition value $N_T$ and the received {\rm SINR} threshold $\gamma_{\rm th}$ is shown in Fig. 4.
	\end{remark}

	\begin{figure}[htbp!]
		\begin{center}
			\begin{minipage}[t]{0.48\textwidth}
				\centering
				\includegraphics[width=3.0in,height=2.4in]{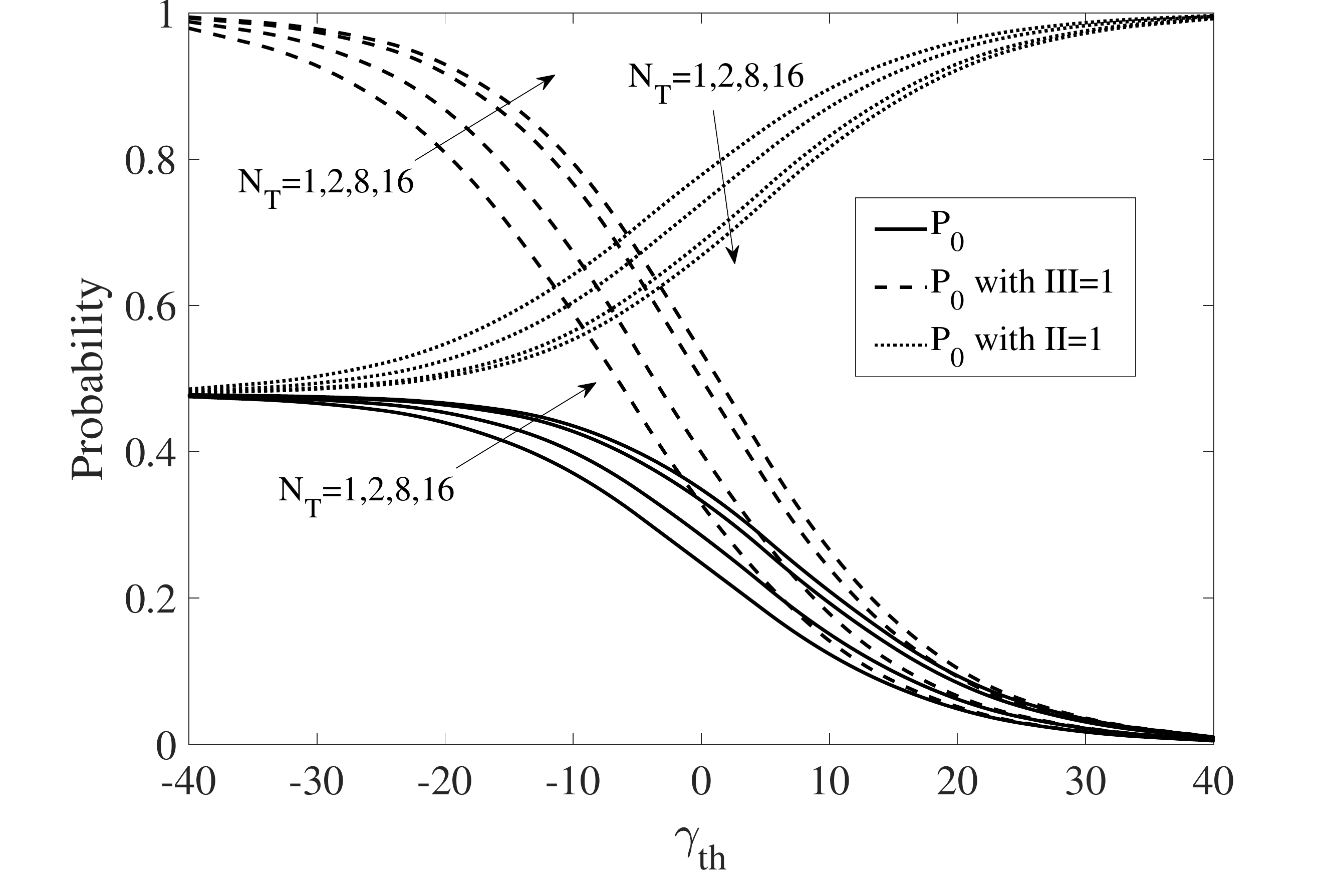}
				\caption{\scriptsize {Comparing RACH access success probability (${\cal P}_0$), transmission success probability (${\cal P}_0$ with III = 1), and non-collision probability (${\cal P}_0$ with II = 1), $\lambda_D=10^4$ UEs/km$^2$.}}
			\end{minipage}
			\label{fig:4}			
		\end{center}
	\end{figure}

	Note that the RACH  success probability of a randomly chosen IoT device in networks with perfect PRACH root sequence planning could be obtained by only considering the intra-cell interference.
	Considering that the practical Voronoi cells do not have a constant radius, we use the average radius $D= 1/\sqrt {\pi {\lambda _B}}$\cite{novlan2013analytical} to approximate it\footnote{Note that for the networks with perfect PRACH root sequence planning, it is difficult to characterize the radius of the practical cell. We use $D= 1/\sqrt {\pi {\lambda _B}}$ to approximate the radius of the practical cell and the results depend on the practical cell shape. 
		Our results have a good match when choosing a proper deployment area and $\lambda_{\rm{B}}$ as shown in Fig. 10 and Fig. 12.}. Thus, the RACH success probability is given in the following \textbf{Lemma 3}.

	\begin{lemma}
		The RACH success probability of a {randomly chosen} IoT device in NB-IoT networks with perfect PRACH root sequence planning  is derived as
		\begin{align}\label{rach_INTRA}
		{{\cal P}_0}&={{\mathbb{E}}_N}{\Big[ { {{{\mathbb P}_{S,0}}[{N_T}]}{\prod\limits_{j = 1}^{n} {\Big(1 - {{\mathbb P}_{S,j}}[{N_T}]\Big)}\Big|N=n }} \Big]}\nonumber\\
		&=\sum\limits_{n = 0}^\infty\Big\{\underbrace{{\mathbb P}[N = n]}_{{\rm I}} { {\underbrace {{{\mathbb P}_{S,0}}[{N_T}]}_{{\rm I}{\rm I}}\underbrace {\prod\limits_{j = 1}^{n} {\Big(1 - {{\mathbb P}_{S,j}}[{N_T}]\Big)}\Big|N=n }_{{\rm I}{\rm I}{\rm I}}} \Big\}},
		\end{align}
		where $\mathbb{P}[N = n]$ is given in \eqref{N=n}
		and
		\begin{align}\label{PRE__INTRA}
		&{{{\mathbb P}_{S,0}}[{N_T}]}
		= \sum\limits_{{n_T} = 1}^{{N_T}} {{( - 1)}^{{n_T} + 1}} {\Big( \begin{array}{l}
			{N_T}\\{n_T}\end{array}  \Big)}\nonumber\\
		&\times\int_0^\infty  {} 2\varepsilon\pi{\lambda_B}r_0\exp\Big(-\varepsilon{\lambda_B}{\pi}r_0^2{{-\frac{{{l\gamma _{th}\sigma ^2}r_0^\alpha }}{P}}}\Big)\nonumber\\
		&\times\exp\Big({ - 2\pi {\lambda _{Da}}\displaystyle\int_{0}^{\frac{1}{\sqrt {\pi {\lambda _B}}}}  {\Big[ {1 - \Big({}{{1 + {\gamma _{th}}r_0^\alpha{y^{ - \alpha }}}}\Big)^{-l}} \Big]} ydy} \Big)d{r_0}.
		\end{align}
	\end{lemma}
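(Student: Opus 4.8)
The plan is to recognize that Lemma~3 is essentially a specialization of Theorem~2 in which inter-cell interference is eliminated, so I would not re-derive the entire expression from scratch but instead trace through the proof of Theorem~2 (Appendix~B, via Lemma~2) and identify the single place where the assumption on interference enters. The structure of \eqref{rach_INTRA} is identical to \eqref{rach_1}: Part~I (the PMF $\mathbb{P}[N=n]$) is unchanged because it counts intra-cell interferers and depends only on the Voronoi cell statistics, and Part~III (the non-collision product) is unchanged because collisions are an intra-cell phenomenon. Hence the only object that must be modified is Part~II, the preamble transmission success probability ${\mathbb P}_{S,0}[N_T]$, and within it only the inner integral that arises from the Laplace transform of the aggregate interference.

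First I would recall from the proof of Lemma~2 that the term
$\exp\bigl(-2\pi\lambda_{Da}\int_0^\infty[1-(1+\gamma_{th}r_0^\alpha y^{-\alpha})^{-l}]\,y\,dy\bigr)$
in \eqref{conditionr} comes from applying the probability generating functional (PGFL) of the PPP $\Phi_{D_a}$ of active interfering devices over the \emph{entire} plane $\mathbb{R}^2$, which captures both intra- and inter-cell interference because interferers from all cells sharing the preamble pool contribute. Under perfect PRACH root sequence planning, neighboring eNBs use distinct root sequences, so no device outside the typical cell produces interference on the chosen preamble. The effect is that the spatial integration in the PGFL should be restricted from $\mathbb{R}^2$ to the typical Voronoi cell rather than extended to infinity.

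The key step is therefore to replace the infinite upper limit of the interference integral by a finite one corresponding to the cell boundary. Since the practical Voronoi cell has a random, non-circular shape, I would invoke the approximation stated just before the lemma, modeling the cell as a disk of radius $D=1/\sqrt{\pi\lambda_B}$ (the average cell radius used in \cite{novlan2013analytical}). Carrying this through the PGFL yields exactly the finite-range integral $\int_0^{1/\sqrt{\pi\lambda_B}}$ appearing in \eqref{PRE__INTRA}, while the path-loss, noise, and nearest-eNB distance factors $2\varepsilon\pi\lambda_B r_0\exp(-\varepsilon\lambda_B\pi r_0^2 - l\gamma_{th}\sigma^2 r_0^\alpha/P)$ are carried over verbatim from the proof of Lemma~2. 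Substituting this modified ${\mathbb P}_{S,0}[N_T]$ together with the unchanged $\mathbb{P}[N=n]$ and the unchanged Part~III into \eqref{rach_1} gives \eqref{rach_INTRA}.

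The main obstacle I expect is justifying the disk approximation of the Voronoi cell rigorously, since the true cell is an irregular random polygon and truncating the interference field at a deterministic radius $D$ is only heuristic. I would not attempt an exact characterization; instead I would argue, as the paper does in its footnote, that $D=1/\sqrt{\pi\lambda_B}$ captures the mean cell area $1/\lambda_B$ and that the resulting expression is an approximation whose accuracy is to be confirmed against the simulations (Figs.~10 and~12). The remaining steps—specializing the PGFL and reassembling the sum over $n$—are routine given the apparatus already established in Lemma~2 and Theorem~2.
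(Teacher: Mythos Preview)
Your proposal is correct and matches the paper's approach exactly. The paper does not supply a separate formal proof for Lemma~3; it simply states, in the paragraph preceding the lemma, that the result is obtained from Theorem~2 by restricting to intra-cell interference only and approximating the Voronoi cell by a disk of radius $D=1/\sqrt{\pi\lambda_B}$, with the accuracy deferred to the simulations in Figs.~10 and~12 --- precisely the modification and justification you outline.
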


	\section{Simulation and Discussion}
	{
		In this section, we verify our analytical results by comparing the theoretical RACH success probabilities with the results from Monte-Carlo simulations.
		For numerical verification, we compute the RACH success probability from Monte-Carlo simulations as follows. We simulate the spatial model described in Section II in MATLAB. 
		The eNBs and IoT devices are deployed via independent HPPPs in a $2\times 10^4$ km$^2$ circle area. Each IoT device associated with its nearest eNB. 
		The IoT devices and the eNBs  remain spatially static during a TTI.
		The channel fading gains between the IoT devices and eNBs are modeled by exponentially distributed random variables.
		Unless otherwise stated, we set $P=0.02$ w, $T_r=6$ ms, $T_g=31$ ms, $\mu_i=0.05$, $\lambda_{\rm{B}}=0.1$ eNBs/km$^2$, $\lambda_{\rm{D}}=10^2$ IoT devices/km$^2$, $\gamma_{th}= 20$ dB, $\alpha = 4$, $\mathcal{A}_a=0.001$, the bandwidth of a subcarrier is BW$ = 3.75$ kHz, and thus the noise is $\sigma^2 = −174+ 10$log$10$(BW) $= −138.3$ dBm.
		For each realization of this setup, the uplink communication has declared a success if 1) the calculated SINR exceeds a pre-determined threshold $\gamma_{th}$ and 2) other uplink communications using the same preamble do not exceed $\gamma_{th}$. 
		In all figures of this section, “Analytical” and “Simulation” are abbreviated as “Ana.” and “Sim.”, respectively.}

	\begin{figure}[htbp!]
	\begin{center}			
		\begin{minipage}[t]{0.48\textwidth}
			\centering
			\includegraphics[width=3.0in,height=2.4in]{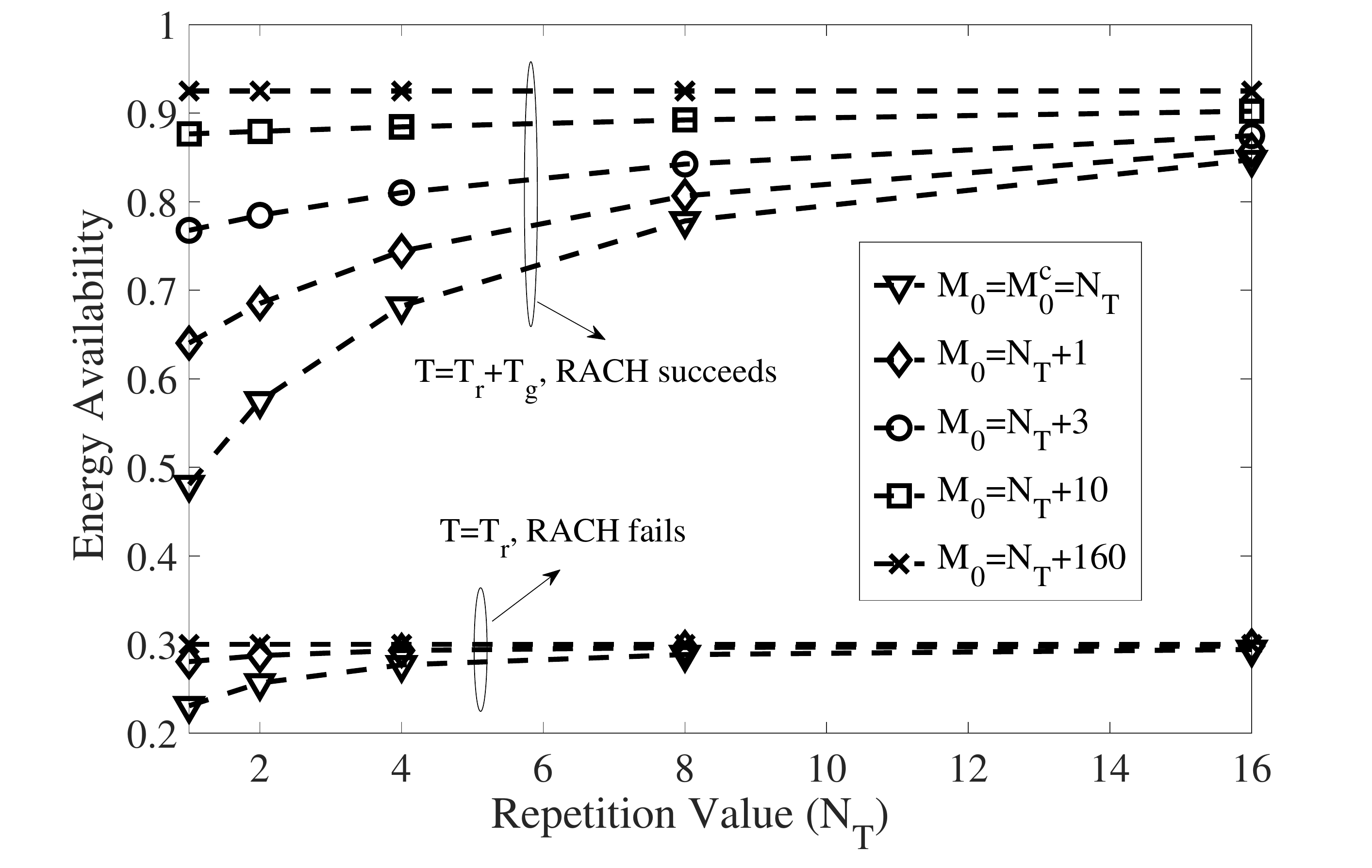}
			\caption{\scriptsize {The energy availability of the IoT device for $S\{N_T\}$ versus the repetition value $N_T$ for various storage capacities ${M_0}$.}}
		\end{minipage}
		\label{fig:5}
	\end{center}
\end{figure}
	
	

	\subsection{Analysis of the Energy Availability}
	{Fig. 5}
	plots the energy availability of a {randomly chosen} IoT device under our strategy  $S\{ N_T\}$ versus the preamble  repetition value $N_T$
	for various storage capacities {${M_0}$} using (\ref{final AVA}). 
	We assume that the RACH in each TTI succeeds or fails on the full set (i.e., the RACH always succeeds or all the RACH always fails), and then we derive the upper bound and the lower bound of the energy availability.
	As such, we could give the availability region for various values of energy availability.
	
	{We first observe that the energy availability of the IoT device increases with increasing the preamble repetition value {$N_T$ at first and then remains unchanged}. This is  due to the fact that for the same storage capacity, increasing the repetition value increases the operation time of the IoT device, i.e., the IoT device spends more time in \rm{ON} state. 
		In addition, since ${M_0^c}=N_T$, increasing the repetition value $N_T$, i.e., increasing the cutoff value, results in more time needed for the IoT device to harvest sufficient energy, i.e, stay in \rm{OFF} state for more time before the transmission.
		As such, to obtain a higher energy availability, a larger repetition value $N_T$ is needed, but if the repetition value is overestimated, the IoT device will waste the potential resource for data transmission and lead to lower resource efficiency. That is to say, the repetition value needs to be optimized. 
	}
	{
		Interestingly, we also observe that for energy storage capacity much larger than cutoff value, the energy availability approaches a specific value for different preamble repetition values, e.g., $\eta_0=0.3$ for the lower bound when and $\eta_0=0.92$ for the upper bound when $M_{0}-{M_{0}^c} \ge160$, which reveals that this setup is surprisingly reliable if it is designed properly, despite the randomness in the energy harvesting.}

	\begin{figure}[htbp!]
		\centering
		\includegraphics[width=3.6in,height=2.6in]{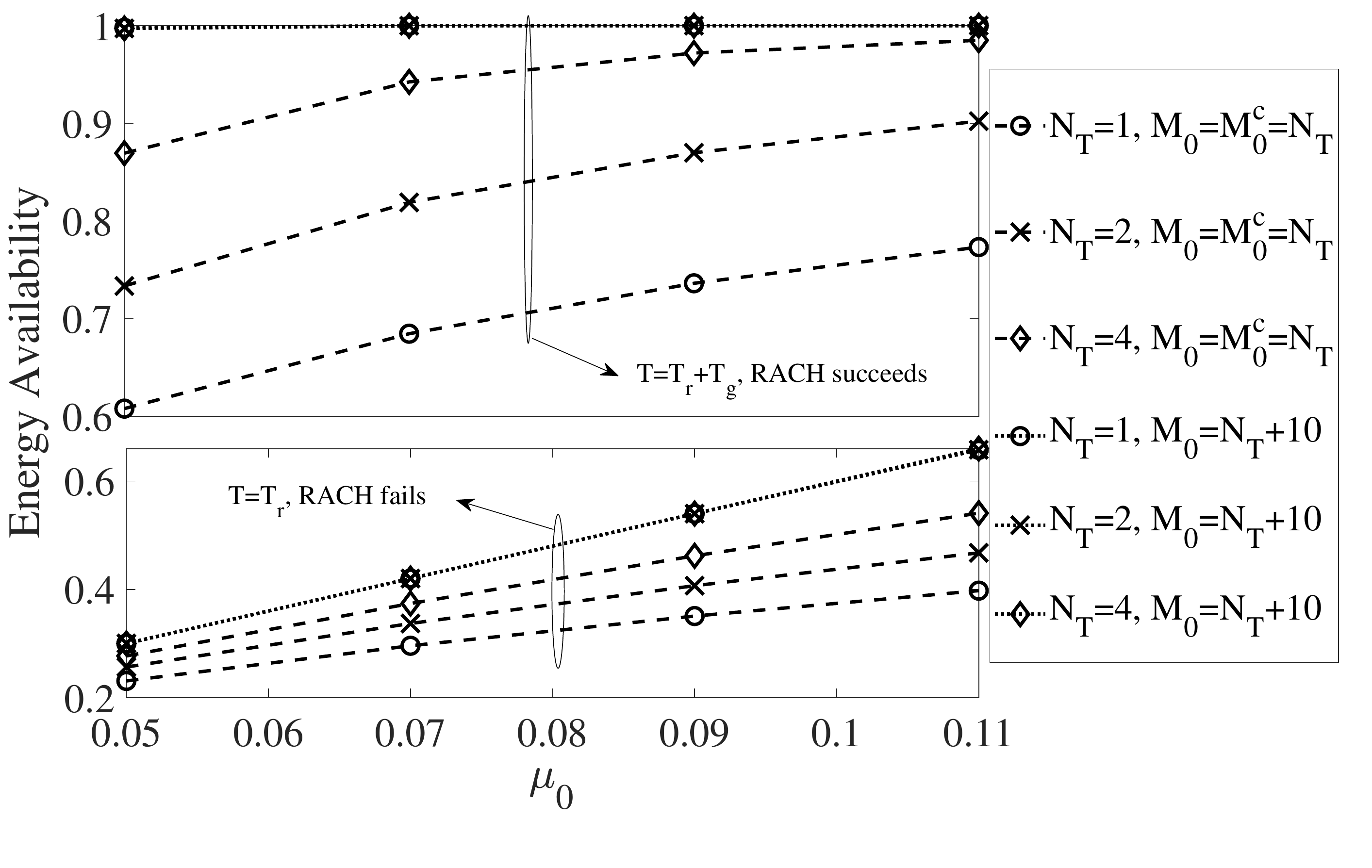}
		\caption{{The energy availability of the IoT device for $S\{N_T\}$ versus the  energy harvesting rate $\mu_0$ for various storage capacities ${M_0}$ and repetition values $N_T$.
		}}
		\label{fig:6}
	\end{figure}
	{Fig. 6 plots the energy availability of the {randomly chosen} IoT device under our strategy  $S\{ N_T\}$ versus the energy harvesting rate $\mu_0$
		for various storage capacities {${M_0}$} and repetition values $N_T$ using (\ref{final AVA}). 
		We first observe that the energy availability of the IoT device increases with increasing the energy harvesting rate $\mu_0$ when the storage capacity $M_0$ is not large enough, e.g.,$M_0=N_T$. This is due to the fact that for the same storage capacity and the cutoff value, increasing the energy harvesting rate results in less time needed for the IoT device to harvest sufficient energy, i.e, stay in \rm{OFF} state for less time before the transmission. 
	}
	{Interestingly, we also observe that for energy storage much larger than cutoff value, e.g., $M_0-M_0^c\ge10$, the energy availability approaches a specific value for different preamble repetition values. That is to say, if it is designed properly, the energy availability is independent of $N_T$.}


	\subsection{Validation of the RACH success probability}
	In this section, we simulate an NB-IoT network and evaluate the RACH success probability based on the energy availability analyzed above.
	Fig. 7 plots the RACH success probability of a randomly chosen IoT device versus the energy availability $\eta$ for various preamble repetition values $N_T$.
	We observe that the RACH success probability deteriorates as the energy availability of IoT devices increases. This is due to the fact that increasing energy availability increases the number of active devices, which leads to lower received SINR and a higher probability of collision. In order to obtain fundamental insights on the RACH success probability due to the preamble repetition value $N_T$, in the following analysis, we use unchanged $\eta_0=0.3$ for different $N_T=$1, 2, 4, and 8, which is obtained by fine tuning $M_{0}-{M_{0}^c} \ge160$  as shown in Fig. 5.
	\begin{figure}[htbp!]
		\begin{center}
			\begin{minipage}[t]{0.48\textwidth}
				\centering
				\includegraphics[width=3.0in,height=2.4in]{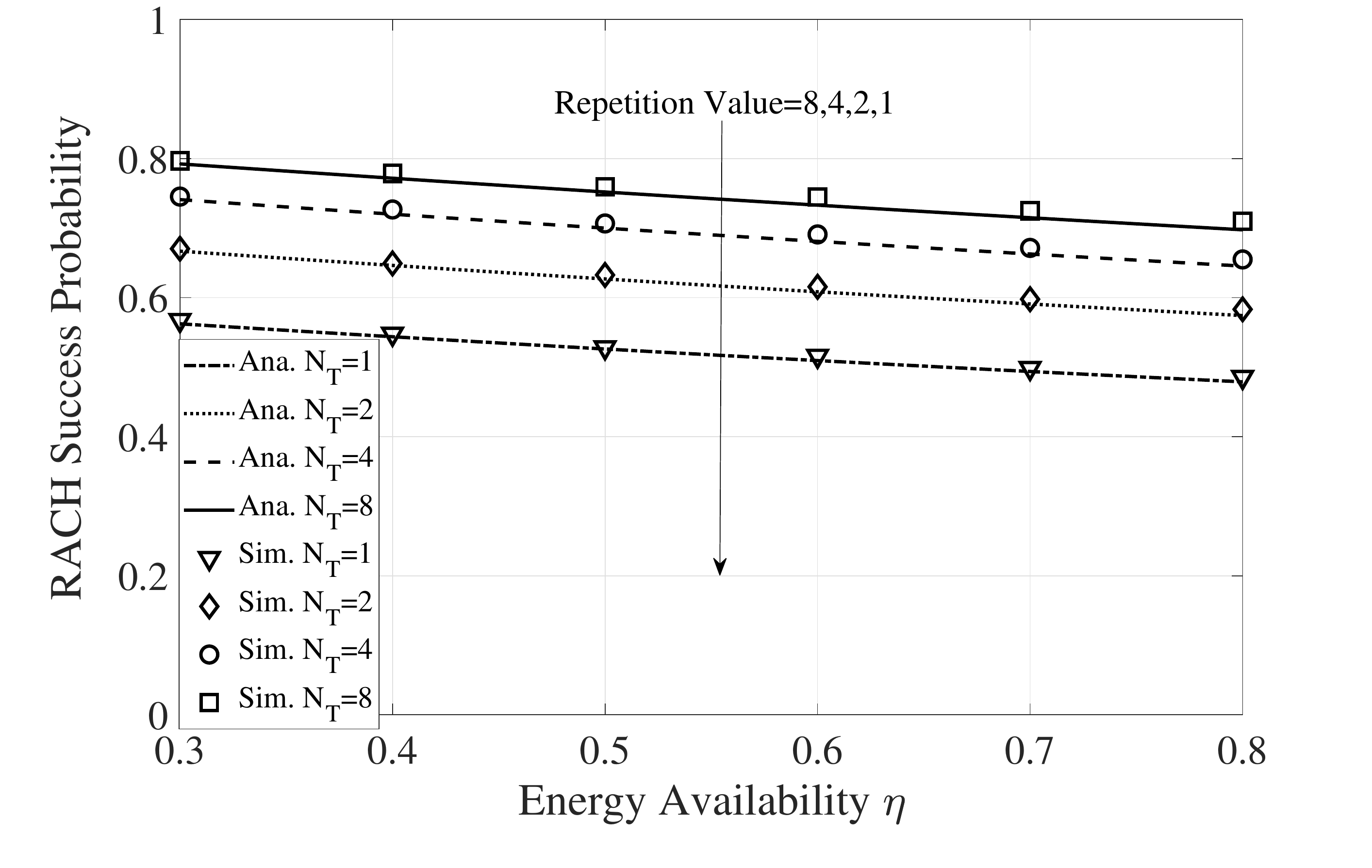}
				\caption{\scriptsize {RACH success probability versus the energy availability  $\eta_0$ for various repetition values $N_T$.}}
			\end{minipage}
			\label{fig:7}
		\end{center}
	\end{figure}
	
	Fig. 8 plots the RACH success probability of a randomly chosen IoT device versus the transmission power $P$ for various preamble repetition values $N_T$.
	We first observe a good match between the analysis and the simulation results, which validates the accuracy of the developed mathematical framework.
	As expected, we observe that the RACH success probability increases as the transmission power of IoT devices increase.
	This can be explained by the reason that whilst increasing the transmission power $P$ leads to higher interference power, it also leads to increased received signal power, thereby improves the overall SIR and hence the RACH success probability. 
	It is worth noting that the RACH success probability increases faster at first (e.g., when $P\le 0.1$) and then gradually becomes steady, which reveals that there is a limit value of the cell maximum transmission power.
	Interestingly, we observe that the RACH success probabilities with a higher repetition value, e.g., $N_T=8$ become stable earlier than those with a lower repetition value, e.g., $N_T=1$, due to that the higher chance of the RACH to succeeds in higher repetition value case.
	
	\begin{figure}[htbp!]
		\begin{center}
						\begin{minipage}[t]{0.48\textwidth}
							\centering
							\includegraphics[width=3in,height=2.4in]{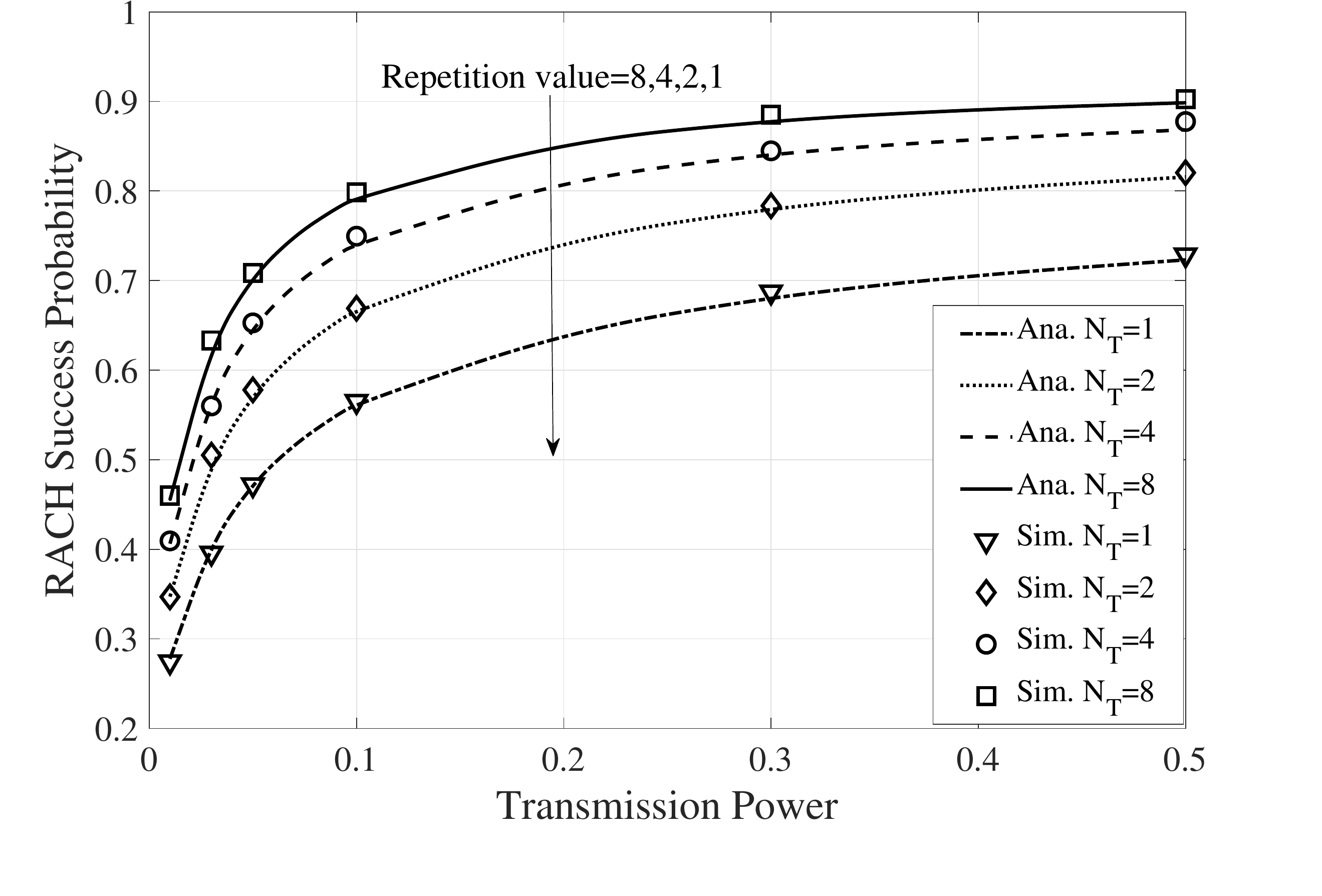}
							\caption{\scriptsize {RACH success probability versus the transmission powers $P$ for various repetition values $N_T$.}}
						\end{minipage}
						\label{fig:8}
		\end{center}
	\end{figure}

	Fig. 9 plots the RACH success probability of a randomly chosen IoT device versus the SINR threshold $\gamma_{th}$  for various preamble repetition values $N_T$.
	As expected, the RACH success probability degrades with an increase in the SINR threshold. 
	According to Fig. 4, increasing SINR threshold $\gamma_{th}$ leads to lower preamble transmission success probability but higher non-collision probability, thereby decreases the overall RACH success probability. 
	There is a tradeoff between preamble transmission success probability and non-collision probability.
	
	
	\begin{figure}[htbp!]
		\begin{center}
			\begin{minipage}[t]{0.48\textwidth}
				\centering
				\includegraphics[width=3in,height=2.4in]{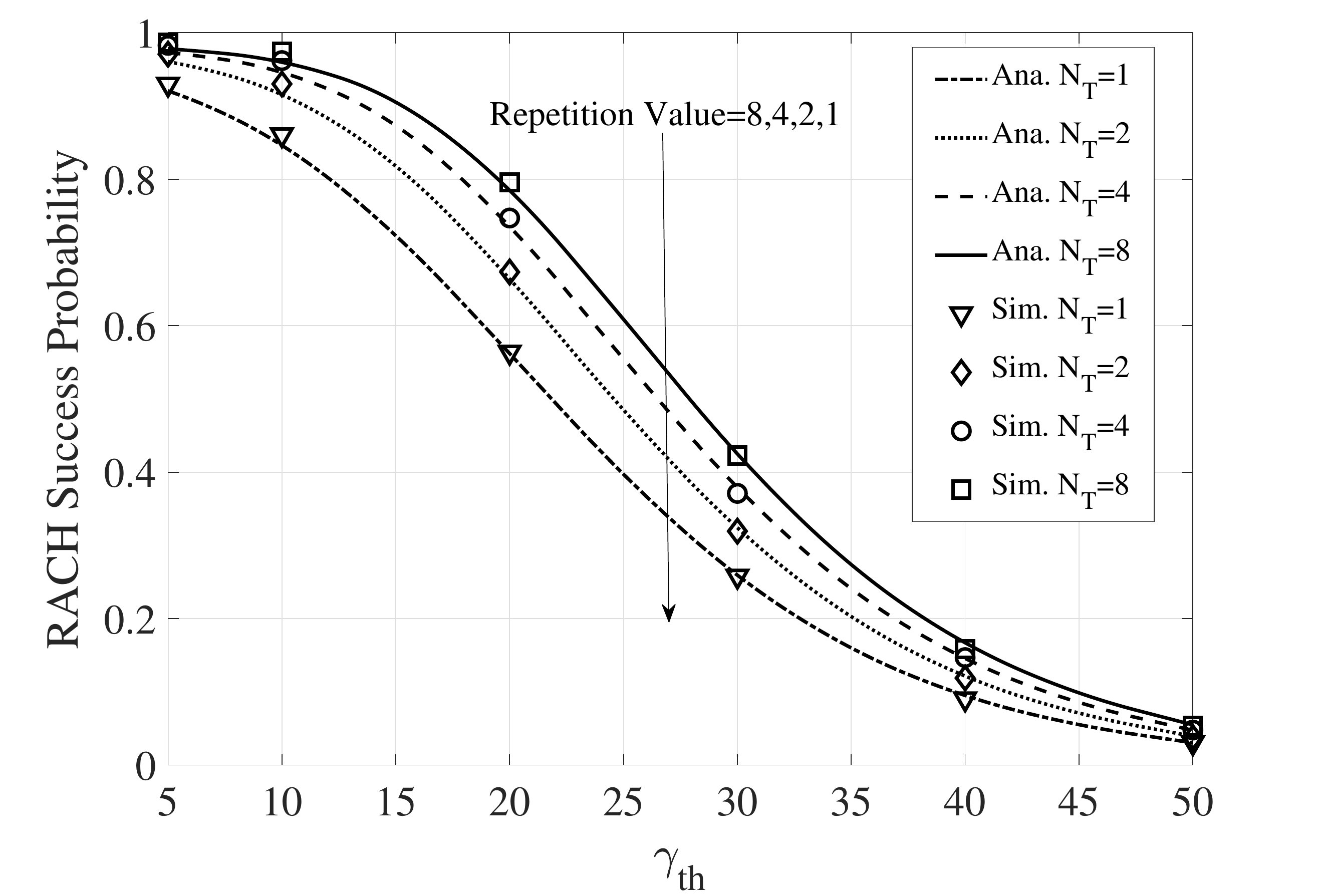}
				\caption{\scriptsize RACH success probability versus the SINR threshold $\gamma_{th}$ for various repetition values $N_T$.}
			\end{minipage}
			\label{fig:9}
		\end{center}
	\end{figure}

	Fig. 10(a) and Fig. 10(b) plot the RACH success probability of a randomly chosen IoT device versus the density ratio $\lambda_D/\lambda_B$  in the network with and without inter-cell interference, respectively.
	We first observe that the RACH success probability decreases with the increase of the density ratio between IoT devices and BSs ($\lambda_D/\lambda_B$), which is due to the following two reasons: 1) increasing the number of IoT devices generating interference leads to lower received SINR at the eNB; 2) increasing the number of IoT devices leads to higher probability of collision. 
	In addition, increasing repetition value increases the RACH success probability due to that it offers more opportunities to re-transmit a preamble with the time and frequency diversity.

		\begin{figure}[htbp!]
		\begin{center}
						\begin{minipage}[t]{0.48\textwidth}
							\centering
							\includegraphics[width=3.2in,height=2.4in]{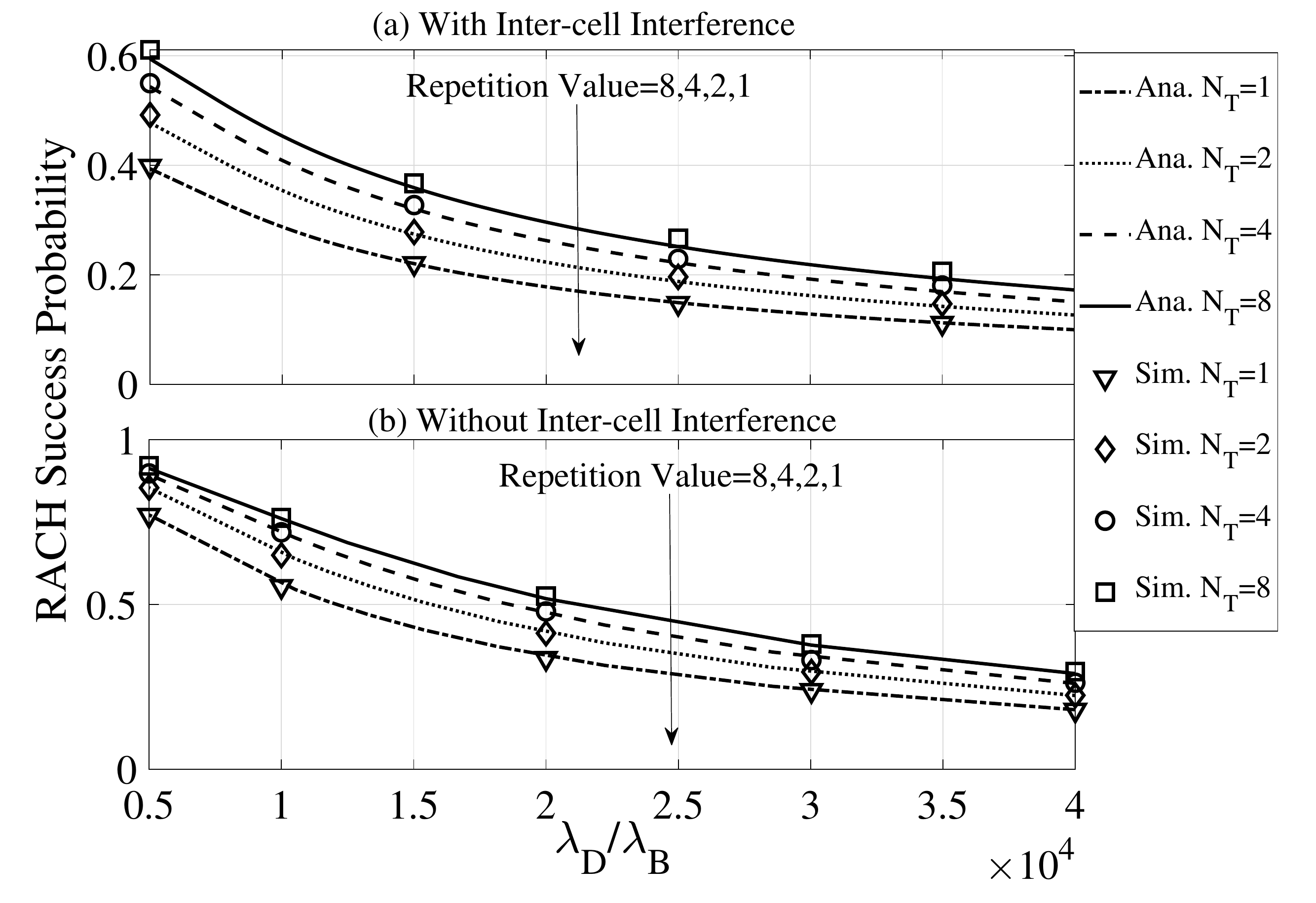}
							\caption{\scriptsize {RACH success probability versus  density ratio $\lambda_D/\lambda_B$ in the network with and without inter-cell interference.}}
						\end{minipage}
						\label{fig:10}
		\end{center}
	\end{figure}

	\begin{figure}[htbp!]
		\centering
		\includegraphics[width=3.2in,height=2.6in]{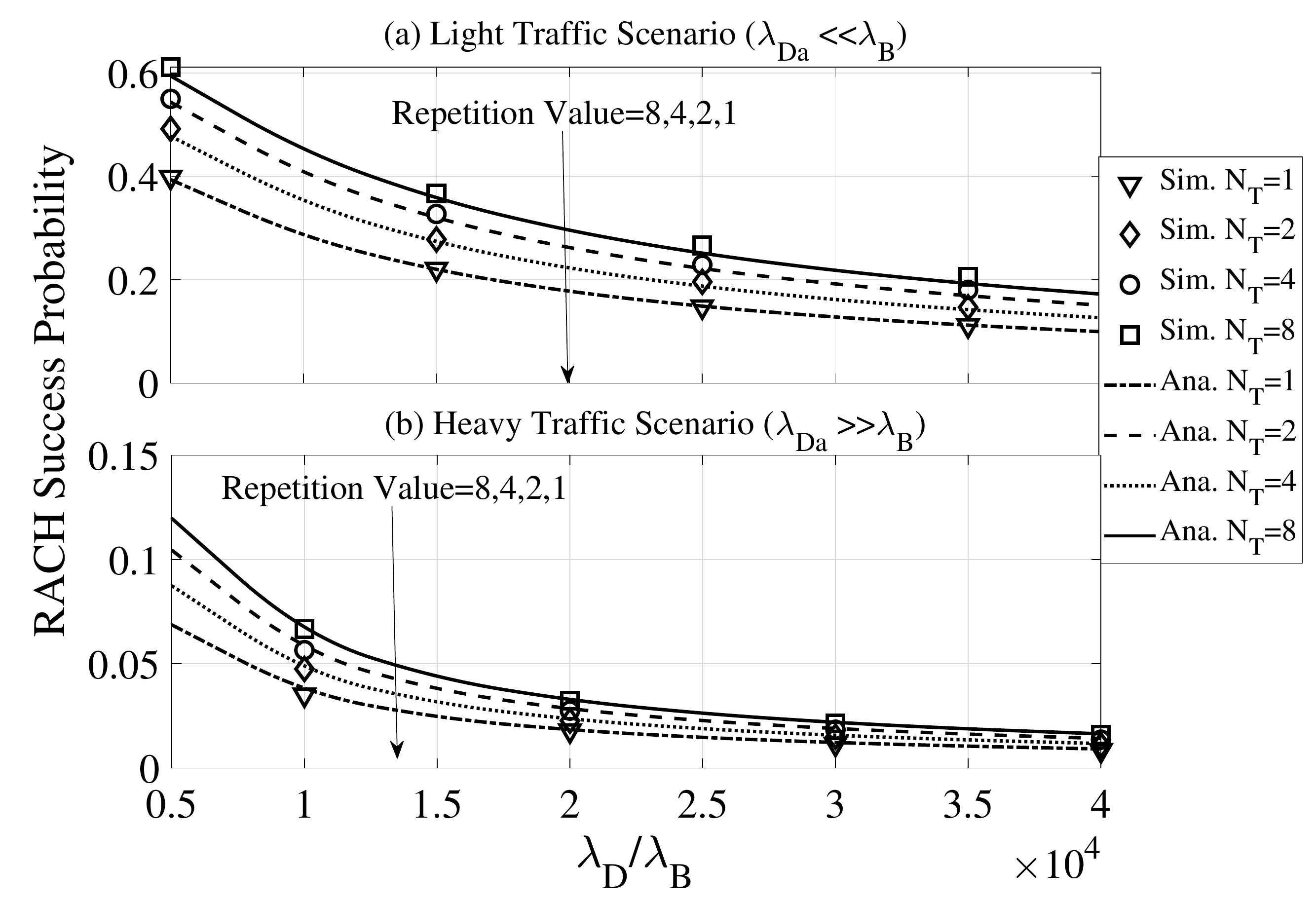}
		\caption{{RACH success probability versus  density ratio $\lambda_D/\lambda_B$ in the network with light traffic and heavy traffic.
		}}
		\label{fig:11}
	\end{figure}
	
Fig. 11(a) and Fig. 11(b) plot the RACH success probability of a randomly chosen IoT device versus the density ratio $\lambda_D/\lambda_B$  in the network with light traffic ($\mathcal{A}_a=0.001$, $\lambda_{Da}\ll\lambda_B$, and $\varepsilon=1$) and heavy traffic ($\mathcal{A}_a=0.015$, $\lambda_{Da}\gg\lambda_B$, and $\varepsilon=1.25$), respectively.
		In the light traffic scenario, RACH success probability increase when increasing the repetition value. However, in the heavy traffic scenario, the RACH success probability cannot improve much when the repetition value increased.
		That is to say, the repetition scheme can efficiently improve the RACH success probability in a light traffic scenario, but only slightly improves that performance with very inefficient channel resource utilization in a heavy traffic scenario.

	\begin{figure}[htbp!]
		\begin{center}
			\begin{minipage}[t]{0.48\textwidth}
				\centering
				\includegraphics[width=3.2in,height=2.6in]{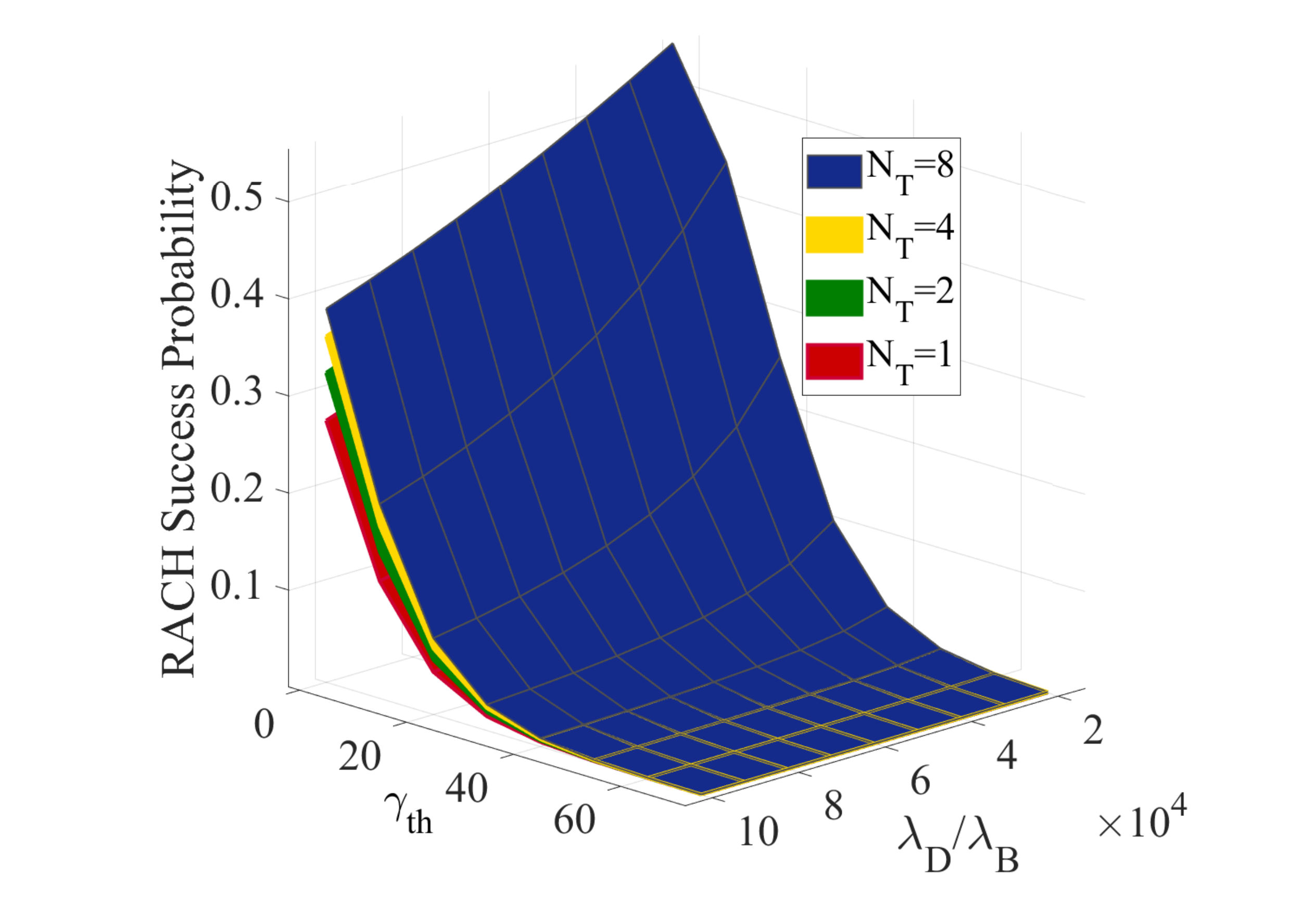}
				\vspace*{-0.4cm}
				\caption{\scriptsize {RACH success probability of a randomly chosen IoT device versus the densities ratio $\lambda_D/\lambda_B$ and SINR threshold $\gamma_{th}$.}}
			\end{minipage}
			\label{fig:12}
			\begin{minipage}[t]{0.48\textwidth}
				\centering
				\includegraphics[width=3.5in,height=2.8in]{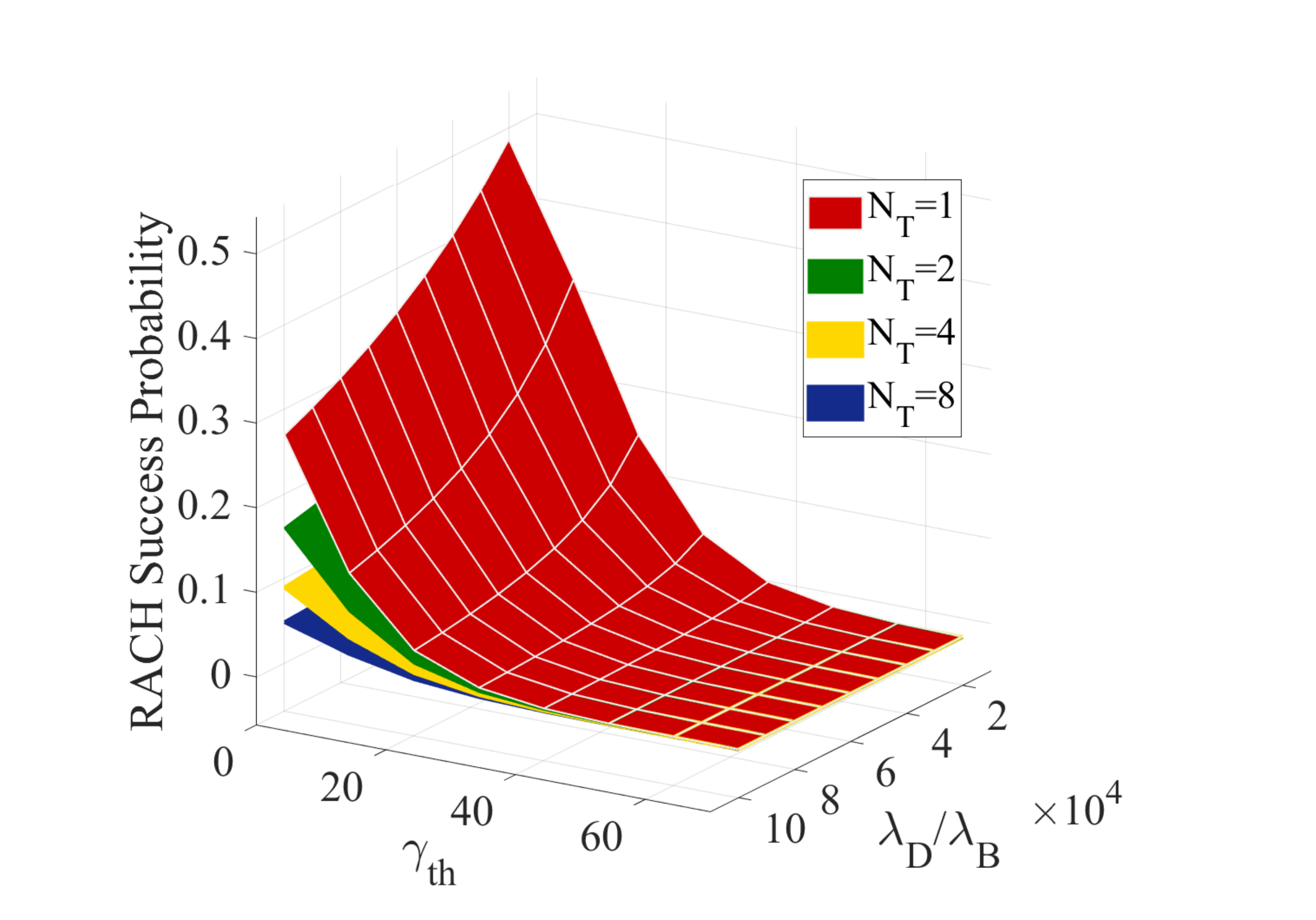}
				\caption{\scriptsize {Repetition efficiency of a randomly chosen IoT device versus the densities ratio $\lambda_D/\lambda_B$ and SINR threshold $\gamma_{th}$.}}
			\end{minipage}
			\label{fig:13}
		\end{center}
	\end{figure}

	Fig. 12 and Fig. 13 plot the RACH success probability and the Repetition  efficiency (i.e., $\zeta = {{\cal P}_0}/{N_T}$)  of a randomly chosen IoT device  versus the density ratio $\lambda_D/\lambda_B$ and SINR thresholds $\gamma_{th}$, respectively.
	Obviously, increasing the repetition value increases the success opportunities of RACH, but decreases the repetition efficiency, which reveals that if the repetition value is overestimated, the IoT device will waste the potential resource for data transmission and lead to lower resource efficiency.
	In addition, we observe that the repetition efficiencies decrease seriously when increasing the repetition value from $N_T=1$ to  $N_T=2$, especially for the lower density ratio.
	This is due to the fact that when there are fewer active IoT devices (e.g., $\lambda_D/\lambda_B=10^4$) to contention for the same resources, it is  likely to succeed with small repetition values,  which contributes to a relatively high channel resource utilization.



	\section{Conclusion}
	In this paper, we have presented a comprehensive RACH success probability analysis for the NB-IoT network with energy harvesting from natural resources.
	We have derived a closed-form expression of the energy availability and analyzed the results for some realistic strategies. 
	We found that energy availability remains unchanged if the network is properly designed despite randomness in energy harvesting.
	We further analyzed the RACH under the repetition transmission scheme in the NB-IoT system based on energy availability.
	We derived the exact expression for the RACH success probability under time-correlated interference. 
	Different from existing works, we considered both SINR outage and collision from the network point of view in the NB-IoT network, where each IoT device adopts fixed transmission power.
Our results have shown that increasing the repetition value increases the RACH success probability but decreases the repetition efficiency. 
		In the light traffic scenario, RACH success probability can meet the requirement with small repetition values. However, in the heavy traffic scenario,  very high repetition value leads to a low channel resources utilization.
	Our results  also have shown that there is an upper limit on transmission power and too large transmission power will waste energy.

	\appendices
	\numberwithin{equation}{section}
	\section{a proof of lemma 1 }
	Let $A_0=-B_0$ and then we have 
	\begin{align}\label{I}
	A_0{A_0}^{-1}=I,
	\end{align}
	where ${A_0}^{-1}$ is the inverse of the matrix ${A_0}$. Let 
	\begin{align}\label{inverse}
	{A_0}^{-1}=(a_{mn})_{1\le m,n\le M_0}=(A_1,A_2,\cdots,A_{M_0}),
	\end{align}
	and
	\begin{align}\label{unit}
	I=(e_{mn})_{1\le m,n\le M_0}=(E_1,E_2,\cdots,E_{M_0}),
	\end{align}
	where $A_r$ and $E_r$ denote the $r$th column of ${A_0}^{-1}$ and $I$ respectively, $r=1, 2, \cdots,M_0$.
	Then (A.1) can be rewritten in the form
	\begin{align}\label{M_i}
	A_0A_{r}=E_{r}.
	\end{align}
	
	In order to calculate ${A_0}^{-1}$, we begin by considering the elements of the first column $A_1=(a_{11}, a_{21},\cdots,a_{M_01})^T$ of ${A_0}^{-1}$.
	That means
	\begin{align}\label{E_i}
	&
	\begin{bmatrix}
	\mu _0 + \nu _0 &  - \mu _0 & 0 &  \cdots & 0 & 0 \\
	- \nu _0 & \mu _0+\nu _0  & - \mu _0 & \cdots  & 0 & 0\\
	\vdots & \vdots & \vdots  &          \ddots &  \vdots               &  \vdots \\
	0 & 0 &  0     &    \cdots  &   {\mu _0} + {\nu _0} & -\mu_0\\
	0 & 0 &  0  & \cdots & -\nu _0 &\nu _0
	\end{bmatrix}
	\begin{bmatrix}
	{a_{1{1}}}\\
	{a_{2{1}}}\\
	\vdots \\
	{a_{{M_{0 - 1}}{1}}}\\
	{a_{{M_0}{1}}}
	\end{bmatrix}
	\nonumber\\
&	=
	\begin{bmatrix}
	1 & 0 & \cdots & 0 & 0
	\end{bmatrix}
	^T.
	\end{align}
	Then we have $A_1$=$(1/\nu_0, 1/\nu_0,\cdots,1/\nu_0)^T$. Plugging $r=2, 3,\cdots,M_0$ into (A.4) respectively, we have all the elements of ${A_0}^{-1}$, i.e, $(-{B_0})^{-1}$.
	
	
	

	\section{a proof of lemma 2 }
	{We note that the preamble transmission success probability in \eqref{SI} depends on the transmission distance $r_0$.
		According to the PDF of $r_0$ given in \eqref{r_0}, we have
	}
	{
		\begin{align}\label{conditionr_0}
		&p_i({\gamma _{th}}) = {\mathbb E_{{R_0}}}\Big[{\mathbb P}_0[ {{\theta _1}(r_0),{\theta _2}(r_0),...,{\theta _{{n_T}}}(r_0)| {{r_0}} } ]\Big]\nonumber\\
		&= \int{{\mathbb P}_0[ {{\theta _1}(r_0),{\theta _2}(r_0),...,{\theta _{{n_T}}}(r_0)| {{r_0}}} ]} f(r_0)d{r_0}\nonumber \\
		&= \int{{\mathbb P}_0}[{\rm SINR_1}({r_0}) \geqslant {\gamma _{th}}, \cdots ,{{\rm SINR}_l}({r_0}) \geqslant {\gamma _{th}}\big| {{r_0}}\big]f(r_0)d{r_0} \nonumber \\
		&= \int{{\mathbb P}_0}\Big[{h_{0}^1} \geqslant \frac{{{\gamma _{th}}r_0^\alpha ({\mathcal {I}_{0}^1} + {\sigma ^2})}}{P},\nonumber \\&\qquad \qquad ...,{h_{0}^l} \geqslant \frac{{{\gamma _{th}}r_0^\alpha ({{\mathcal {I}_{0}^l}} + {\sigma ^2})}}{P}\Big| {{r_0}} \big.\Big]f(r_0)d{r_0} \nonumber \\
		&\overset{\text{(a)}}= \int{\mathbb{E}}\Big[ \exp \Big( - {\frac{{{\gamma _{th}}r_0^\alpha }}{P}}({{\mathcal {I}_{0}^1}} + {\sigma ^2})\Big) \nonumber \\ 
		&\qquad \qquad...\exp \Big( - {{\frac{{{\gamma _{th}}r_0^\alpha }}{P}}}({{\mathcal {I}_{0}^l}} + {\sigma ^2})\Big)\Big| {{r_0}} 
		\big. \Big]f(r_0)d{r_0} \nonumber \\
		&= \int\exp \Big(  {{\frac{{{-l\gamma _{th}\sigma ^2}r_0^\alpha }}{P}}}{}\Big)\nonumber \\
		&\times{\mathbb{E}}\Big[ {\exp \Big(  {{\frac{{-{\gamma _{th}}r_0^\alpha }}{P}}}\sum\limits_{\beta  = 1}^l {{{\mathcal {I}_{0}^\beta}}\Big)\Big| {{r_0}} \big.} } \Big]f(r_0)d{r_0},
		\end{align}
		where (a) follows from the independence of $h_{0}^l$, and $\mathcal I_{0}$ is given in \eqref{INTRA}.
	}

	The Laplace transform of the aggregate interference is characterized according to the definition of ${{\mathcal L_{I_0}}(s) = {{\mathbb E}_{I_0}}({e^{ - sI_0}})}$ as 
	\begin{align}\label{laplace}
	&{\mathbb{E}}\Big[ {\exp \Big( - {{\frac{{{\gamma _{th}}r_0^\alpha }}{P}}}\sum\limits_{\beta  = 1}^l {{{\mathcal {I}_{0}^\beta}}\Big)\Big| {{r_0}} \big.} } \Big] 
	\nonumber \\
	&= {\mathbb E}\bigg[ {\prod\limits_{j \in  \mathcal{Z_D}}^{} {\exp \Big( - {\gamma _{th}}r_0^\alpha \sum\limits_{\beta  = 1}^l {{h_{j}^\beta }} r_j^{ - \alpha }\Big)\Big| {{r_0}} \big.} } \bigg] \nonumber \\
	&\overset{\text{(a)}}= {\mathbb E}\Big[ {\prod\limits_{j \in  \mathcal{Z_D}}^{} \Big({\frac{1}{{1 + {\gamma _{th}}r_0^\alpha r_j^{ - \alpha }}}}\Big)^l } \Big]  \\ \nonumber
	& \overset{\text{(b)}}= \exp \Big( { - 2\pi {\lambda _{Da}}\int_{0}^\infty  {\Big[ {1 - \Big({}{{1 + {\gamma _{th}}r_0^\alpha{y^{ - \alpha }}}}\Big)^{-l}} \Big]} ydy} \Big) 
	\end{align}
	where (a) is obtained by taking the average with respect to $h_{j}^\beta$ and (b) follows from the probability generation functional (PGFL) of the PPP. 
	Substituting (\ref{laplace}) into \eqref{conditionr_0}, we verified (\ref{conditionr}) in Lemma 2.

	\ifCLASSOPTIONcaptionsoff
	\newpage
	\fi

	\bibliographystyle{IEEEtran}

	\bibliography{IEEEabrv,energy}
	
\end{document}